\newcommand{\comment}[1]{}
\newcommand{\figwidth}{0.99\columnwidth}
\newtheorem{theorem}{Theorem}
\newtheorem{definition}{Definition}
\newtheorem{observation}{Observation}
\title{Minimally Infrequent Itemset Mining using Pattern-Growth Paradigm and
Residual Trees}
\author{
%\hspace*{-5mm}
\begin{tabular}{ccc}
	Ashish Gupta & Akshay Mittal & Arnab Bhattacharya \\
	\url{ashgupta@cse.iitk.ac.in} & \url{amittal@cse.iitk.ac.in} & \url{arnabb@iitk.ac.in} \\
	\multicolumn{3}{c}{Dept. of Computer Science and Engineering} \\
	\multicolumn{3}{c}{Indian Institute of Technology, Kanpur} \\
	\multicolumn{3}{c}{Kanpur, India} \\
\end{tabular}
}
\date{}
\begin{document}

\maketitle

\begin{abstract}
	Itemset mining has been an active area of research due to its successful
	application in various data mining scenarios including finding association
	rules.  Though most of the past work has been on finding frequent itemsets,
	infrequent itemset mining has demonstrated its utility in web mining,
	bioinformatics and other fields.  In this paper, we propose a new algorithm
	based on the pattern-growth paradigm to find minimally infrequent itemsets.
	A minimally infrequent itemset has no subset which is also infrequent.  We
	also introduce the novel concept of residual trees.  We further utilize the
	residual trees to mine multiple level minimum support itemsets where
	different thresholds are used for finding frequent itemsets for different
	lengths of the itemset.  Finally, we analyze the behavior of our algorithm
	with respect to different parameters and show through experiments that it
	outperforms the competing ones. \\
\end{abstract}

\noindent
\textbf{Keywords:} Itemset Mining, Minimal Infrequent Itemsets, Residual Tree,
Projected Tree.

\section{Introduction}
\label{sec:intro}

Mining frequent itemsets has found extensive utilization in various data mining
applications including consumer market-basket analysis~\cite{ragrawal},
inference of patterns from web page access logs~\cite{web}, and iceberg-cube
computation~\cite{iceberg}. Extensive research has, therefore, been conducted in
finding efficient algorithms for frequent itemset mining, especially in
finding association rules~\cite{assocfull}.

However, significantly less attention has been paid to mining of infrequent
itemsets, even though it has got important usage in (i)~mining of negative
association rules from infrequent itemsets~\cite{xindongfull}, (ii)~statistical
disclosure risk assessment where rare patterns in anonymous census data can
lead to statistical disclosure~\cite{minit}, (iii)~fraud detection where rare
patterns in financial or tax data may suggest unusual activity associated with
fraudulent behavior~\cite{minit}, and (iv)~bioinformatics where rare patterns in
microarray data may suggest genetic disorders~\cite{minit}.

The large body of frequent itemset mining algorithms can be broadly classified
into two categories: (i)~candidate generation-and-test paradigm and
(ii)~pattern-growth paradigm. In earlier studies, it has been shown
experimentally that pattern-growth based algorithms are computationally faster
on dense datasets.

Hence, in this paper, we leverage the pattern-growth paradigm
to propose an algorithm IFP\_min for mining minimally infrequent itemsets. For
some datasets, the set of infrequent itemsets can be exponentially large.
Reporting an infrequent itemset which has an infrequent proper subset is
redundant, since the former can be deduced from the latter. Hence, it is
essential to report only the minimally infrequent itemsets.

Haglin et al. proposed an algorithm, MINIT, to mine minimally infrequent
itemsets~\cite{minit}.  It generated all potential candidate minimal infrequent
itemsets using a ranking order of the items based on their supports and then
validated them against the entire database.

Instead, our proposed IFP\_min algorithm proceeds by processing minimally
infrequent itemsets by partitioning the dataset into two parts, one containing a
particular item and the other that does not.

If the support threshold is too high, then less number of frequent itemsets will
be generated resulting in loss of valuable association rules. On the other hand,
when the support threshold is too low, a large number of frequent itemsets and
consequently large number of association rules are generated, thereby making it
difficult for the user to choose the important ones. Part of the problem lies in
the fact that a single threshold is used for generating frequent itemsets
irrespective of the length of the itemset. To alleviate this problem, Multiple
Level Minimum Support (MLMS) model was proposed~\cite{mlms}, where separate
thresholds are assigned to itemsets of different sizes in order to constrain the
number of frequent itemsets mined. This model finds extensive applications in
market basket analysis~\cite{mlms} for optimizing the number of association
rules generated.  We extend our IFP\_min algorithm to the MLMS framework as
well.

In summary, we make the following contributions:
\begin{itemize}
	\item We propose a new algorithm IFP\_min for mining minimally infrequent
		itemsets.  To the best of our knowledge, this is the first such
		algorithm based on pattern-growth paradigm (Section~\ref{sec:algo}).
	\item We introduce the concept of \emph{residual trees} using a variant of
		the FP-tree structure termed as inverse FP-tree (Section~\ref{sec:ifp}).
	\item We propose an optimization on the Apriori algorithm to mine minimally
		infrequent itemsets (Section~\ref{sec:apriori}).
	\item We present a detailed study to quantify the impact of variation in the
		density of datasets on the computation time of Apriori, MINIT and our
		algorithm.
	\item We extend the proposed algorithm to mine frequent itemsets in the MLMS
		framework (Section~\ref{sec:MLMS}).
\end{itemize}

\begin{table}[t]
\centering
{%\small
\begin{tabular}{|c|c|}
\hline
Tid & Transactions\\
\hline
T$_1$ & F, E\\
T$_2$ & A, B, C\\
T$_3$ & A, B\\
T$_4$ & A, D\\
T$_5$ & A, C, D\\
T$_6$ & B, C, D\\
T$_7$ & E, B\\
T$_8$ & E, C\\
T$_9$ & E, D\\
\hline
\end{tabular}
}
\caption{Example database for infrequent itemset mining.}
\label{tab:exmii}
\end{table}

\subsection{Problem Specification}
\label{sec:problem}

Consider $I = \{x_1, x_2, \dots, x_n\}$ to be a set of items. An \emph{itemset}
$X \subseteq I$ is a subset of items. If its length or \emph{cardinality} is
$k$, it is referred to as a \emph{k-itemset}. A \emph{transaction} $T$ is a
tuple $(tid, X)$ where $tid$ is the transaction identifier and $X$ is an
itemset. It is said to \emph{contain} an itemset $Y$ if and only if $Y \subseteq
X$.  A \emph{transaction database} $TD$ is simply a set of transactions.

Each itemset has an associated statistical measure called \emph{support}. For an
itemset $X$, $supp(X, TD) = X.count$ where $X.count$ is the number of
transactions in $TD$ that contains $X$. For a user defined threshold $\sigma$,
an itemset is \emph{frequent} if and only if its support is greater than or
equal to $\sigma$.  It is \emph{infrequent} otherwise.

As mentioned earlier, the number of infrequent itemsets for a particular
database may be quite large. It may be impractical to generate and report all of
them.  A key observation here is the fact that if an itemset is infrequent, so
will be all its supersets. Thus, it makes sense to generate only \emph{minimal}
infrequent itemsets, i.e., those which are infrequent but whose all subsets are
frequent.

\begin{definition}[Minimally Infrequent Itemset]
	An itemset $X$ is said to be \emph{minimally infrequent} for a support
	threshold $\sigma$ if it is infrequent and all its proper subsets are
	frequent, i.e., $supp(X) < \sigma$ and $\forall Y \subset X, supp(Y) \geq
	\sigma$.
\end{definition}

Given a particular support threshold, our goal is to efficiently generate all
the minimally infrequent itemsets (MIIs) using the pattern-growth paradigm. 

Consider an example transaction database shown in Table~\ref{tab:exmii}. If
$\sigma = 2$, $\{B, D\}$ is one of the minimally infrequent itemsets for the
transaction database. All its subsets, i.e., $\{B\}$ and $\{D\}$, are frequent
but it itself is infrequent as its support is $1$.  The whole set of MIIs for
the transaction database is $\{\{E, B\}$, $\{E, C\}$, $\{E, D\}$, $\{B, D\}$,
$\{A, B, C\}$, $\{A, C, D\}$, $\{A, E\}$, $\{F\}\}$. Note that $\{B, F\}$ is not
a MII since one of its subsets $\{F\}$ is infrequent as well.

In the MLMS framework, the problem is to find all frequent (equivalently,
infrequent) itemsets with different support thresholds assigned to itemsets of
different lengths. We define $\sigma_k$ as the minimum support threshold for a
\emph{k-itemset} ($k = 1, 2, \dots, n$) to be frequent.  A \emph{k-itemset} $X$
is frequent if and only if $supp(X,TD) \geq \sigma_k$.

For efficient processing of MLMS itemsets, it is useful to sort the list of
items in each transaction in increasing order of their support counts. This is
called the \emph{i-flist} order.  Also, let $\sigma_{low}$ be lowest minimum
support threshold.

Most applications use the constraint $\sigma_1 \geq \sigma_2 \geq \dots \geq
\sigma_n$.  This is intuitive as the support of larger itemsets can only
decrease or at most remain constant.  In this case, $\sigma_{low} = \sigma_n$.
Our algorithm IFP\_MLMS, however, does not depend on this assumption, and works
for any general $\sigma_1, \dots, \sigma_n$.

Consider an example transaction database shown in Table~\ref{tab:exmlms}. The
frequent itemsets corresponding to the different thresholds are shown in
Table~\ref{tab:freqmlms}.

\begin{table}[t]
\centering
{%\small
\begin{tabular}{|c|c|c|}
\hline
Tid & Transactions & Items in \emph{i-flist} order\\
\hline
T$_1$ & A, C, T, W & A, T, W, C\\
T$_2$ & C, D, W & D, W, C\\
T$_3$ & A, C, T, W & A, T, W, C\\
T$_4$ & A, D, C, W & A, D, W, C\\
T$_5$ & A, T, C, W, D & A, D, T, W,\\
T$_6$ & C, D, T, B & B, D, T, C\\
\hline
\end{tabular}
}
\caption{Example database for MLMS model.}
\label{tab:exmlms}
\end{table}

\begin{table}[t]
\centering
{%\small
\hfill{}
\begin{tabular}{|c|l|}
\hline
$\sigma_k$ & \multicolumn{1}{|c|}{Frequent \emph{k-itemsets}}\\
\hline
$\sigma_1 = 4$ & \{C\}, \{W\}, \{T\}, \{D\}, \{A\} \\
$\sigma_2 = 4$ & \{C, D\}, \{C, W\}, \{C, A\}, \{W, A\}, \{C, T\} \\
\multirow{2}{*}{$\sigma_3 = 3$} & \{C, W, T\}, \{C, W, D\}, \{C, W, A\}, \\
& \quad \{C, T, A\}, \{W, T, A\} \\
$\sigma_4 = 2$ & \{C, W, T, A\}, \{C, D, W, A\} \\
$\sigma_5 = 1$ & \{C, W, T, D, A\} \\
\hline
\end{tabular}
}
\hfill{}
\caption{Frequent \emph{k-itemsets} for database in Table~\ref{tab:exmlms}.}
\label{tab:freqmlms}
\end{table}

\section{Related Work} 
\label{sec:related}

The problem of mining frequent itemsets was first introduced by Agrawal et
al.~\cite{ragrawal}, who proposed the \emph{Apriori}-algorithm. Apriori is a
bottom-up, breadth-first search algorithm that exploits the downward closure
property ``all subsets of frequent itemsets are frequent''.  Only candidate
frequent itemsets whose subsets are all frequent are generated in each database
scan.  Apriori needs $l$ database scans if the size of the largest frequent
itemset is $l$.  In this paper, we propose a variation of the Apriori algorithm
for mining minimally infrequent itemsets (MIIs).

In~\cite{han}, Han et al. introduced a novel algorithm known as the
\emph{FP-growth} method for mining frequent itemsets. The FP-growth method is a
depth-first search algorithm. A data structure called the \emph{FP-tree} is used
for storing the frequency information of itemsets in the original transaction
database in a compressed form.  Only two database scans are needed for the
algorithm and no candidate generation is required. This makes the FP-growth
method much faster than Apriori.  In~\cite{fparray}, Grahne et al. introduced a
novel \emph{FP-array} technique that greatly reduces the need to traverse the
FP-trees. In this paper, we use a variation of the FP-tree for mining the MIIs.

To the best of our knowledge there has been only one other work that discusses
the mining of MIIs. In~\cite{minit}, Haglin et al. proposed the algorithm
\emph{MINIT} which is based upon the \emph{SUDA2} algorithm developed for
finding unique itemsets (itemsets with no unique proper subsets)~\cite{suda1,
suda2}. The authors also showed that the minimal infrequent itemset problem is
NP-complete~\cite{minit}.

In~\cite{mlms}, Dong et al. proposed the MLMS model for constraining the number
of frequent and infrequent itemsets generated. A candidate generation-and-test
based algorithm \emph{Apriori\_MLMS} was proposed in~\cite{mlms}. The downward
closure property is absent in the MLMS model, and thus, the Apriori\_MLMS
algorithm checks the supports of all possible \emph{k-itemsets} occurring at
least once in the transaction database, for finding the frequent itemsets.
Generally, the support thresholds are chosen randomly for different length
itemsets with the constraint $\sigma_i \geq \sigma_j, \forall i < j$.
In~\cite{interestingMLMS}, Dong et al. extended their proposed algorithm
from~\cite{mlms} to include an interestingness parameter while mining frequent
and infrequent itemsets.

\section{Need for Residual Trees}
\label{sec:why}

By definition, an itemset is a minimally infrequent itemset (MII) if and only if
it is infrequent and all its subsets are frequent. Thus, a trivial algorithm to
mine all MIIs would be to compute all the subsets for every infrequent itemset
and check if they are frequent. This involves finding all the frequent and
infrequent itemsets in the database and proceeding with checking the subsets of
infrequent itemsets for occurrence in the large set of frequent itemsets. This
is a simple but computationally expensive algorithm.

The use of \emph{residual trees} reduces the computation time.  A residual tree
for a particular item is a tree representation of the residual database
corresponding to the item, i.e., the entire database with the item removed.  We
show later that a MII found in the residual tree is a MII of the entire
transaction database.

The projected database, on the other hand, corresponds to the set of
transactions that contains a particular item.  A potential minimal infrequent
itemset mined from the projected tree must not have any infrequent subset.  The
itemset itself is a subset since it is actually the union with the item of the
projected tree that is under consideration.  As we show later, the support of
only this itemset needs to be computed from the corresponding residual tree.

In this paper, our proposed algorithm \emph{IFP\_min} uses a structure similar
to the FP-tree~\cite{han} called the \emph{IFP-tree}.  This is due to the fact
that the IFP-tree provides a more visually simplified version of the residual
and projected trees that leads to enhanced understanding of the algorithm.  A
similar algorithm FP\_min can be designed that uses the FP-tree.  The time
complexity remains the same. In the next section, we describe in detail the
IFP-tree and the corresponding structures, projected tree and residual tree.

\section{Inverse FP-tree (IFP-tree)}
\label{sec:ifp}

The \emph{inverse FP-tree (IFP-tree)} is a variant of the FP-tree~\cite{han}.
It is a compressed representation of the whole transaction database.  Every path
from the root to a node represents a transaction. The root has an empty label.
Except the root node, each node of the tree contains four fields: (i)~item id,
(ii)~count, (iii)~list of child links, and (iv)~a node link. The item id field
contains the identifier of the item. The count field at each node stores the
support count of the path from the root to that node. The list of child links
point to the children of the node. The node link field points to another node
with the same item id that is present in some other branch of the tree. 

An \emph{item header table} is used to facilitate the traversal of the tree. The
header table consists of items and a link field associated with each item that
points to the first occurrence of the item in the IFP-tree. All link entries of
the header table are initially set to \emph{null}. Whenever an item is added
into the tree, the corresponding link entry of the header table is updated.
Items in each transaction are sorted according to their order in \emph{i-flist}.

For inserting a transaction, a path that shares the same prefix is searched. If
there exists such a path, then the count of the common prefix is incremented by
one in the tree and the remaining items of the transaction (which do not share
the path) are attached from the last node with their count value set to $1$. If
items of a transaction do not share any path in the tree, then they are attached
from the root. The IFP-tree for the sample transaction database in
Table~\ref{tab:exmii} (considering only the transactions T$_2$ to T$_9$) is shown in
Figure~\ref{fig:full_ifp}.

\begin{figure}[t]
\centering
\includegraphics[width=0.70\columnwidth]{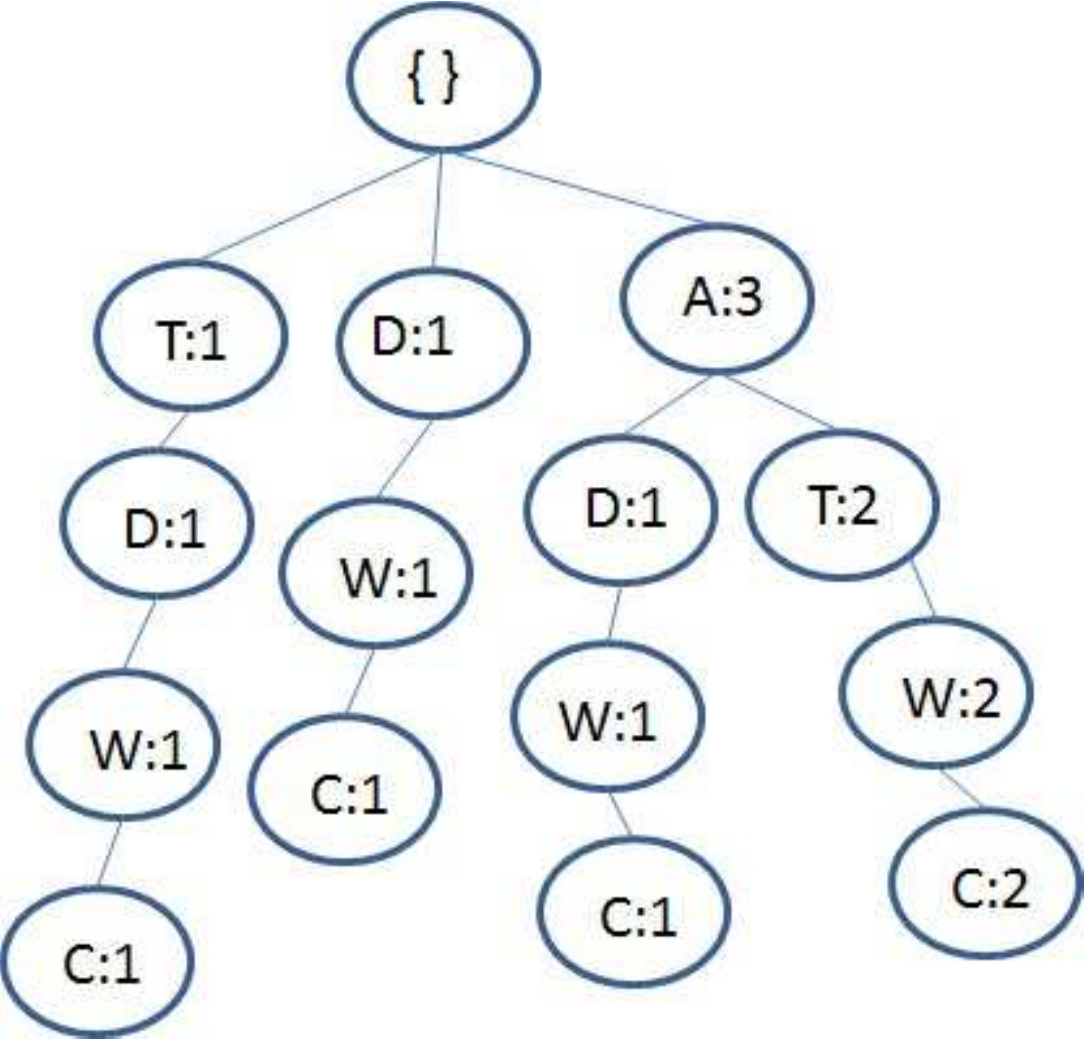}
\caption{IFP-tree corresponding to the transaction database in
Table~\ref{tab:exmii} (T$_2$-T$_9$).}
\label{fig:full_ifp}
\end{figure}

The \emph{IFP\_min} algorithm recursively mines the minimally infrequent
itemsets (MIIs) by dividing the IFP-tree into two sub-trees: projected tree and
residual tree.  The next two sections describe them.

\subsection{Projected Tree}
\label{sec:proj}

Suppose $TD$ be the transaction database represented by the IFP-tree $T$ and
$TD_x$ denotes the database of transactions that contain the item $x$. The
\emph{projected database} corresponding to the item $x$ is the database of these
transactions $TD_x$, but after removing the item $x$ from the transactions. The
IFP-tree corresponding to this database is called the \emph{projected tree}
$T_{P_x}$ of item $x$ in $T$.  Figure~\ref{fig:proj1} shows the projected tree
of item $A$, which is the least frequent item in Figure~\ref{fig:full_ifp}.

The IFP\_min algorithm considers the projected tree of only the \emph{least
frequent item (lf-item)}.  Henceforth, for simplicity, we associate every
IFP-tree with only a single projected tree which is that of the \emph{lf-item}.
Moreover, since the items are sorted in the \emph{i-flist} order, there would be
a single node of the \emph{lf-item} $x$ in the IFP-tree. Thus, the projected
tree of $x$ can be obtained directly from the IFP-tree by considering the
subtree rooted at node $x$.

\subsection{Residual Tree}
\label{sec:resd}

The \emph{residual database} corresponding to an item $x$ is the database of
transactions obtained by removing item $x$ from $TD$. The IFP-tree corresponding
to this database is called \emph{residual tree} $T_{R_x}$ of item $x$ in $T$.
Figure~\ref{fig:resd1} shows the residual tree of the least frequent item $A$.
It is obtained by deleting the node corresponding to item $A$ and then merging
the subtree below that node into the main tree at appropriate positions.

Similar to the projected tree, the IFP\_min algorithm considers the residual
tree of only the lf-item.  Since there is only a single node of the lf-item $x$
in the IFP-tree, the residual tree of $x$ can be obtained directly from the
IFP-tree by deleting the node $x$ from the tree and then merging the subtree
below the node $x$ with the rest of the tree. 

Furthermore, the projected and residual tree of the next item (i.e., $E$) in the
i-flist is associated with the residual tree of the current item (i.e., $A$).
Figure~\ref{fig:projresd2} shows the projected and residual trees of the item
$E$ for the tree $T_{R_A}$.

\begin{figure}[t]
\centerline{
\subfloat[]{\includegraphics[width=0.3\columnwidth]{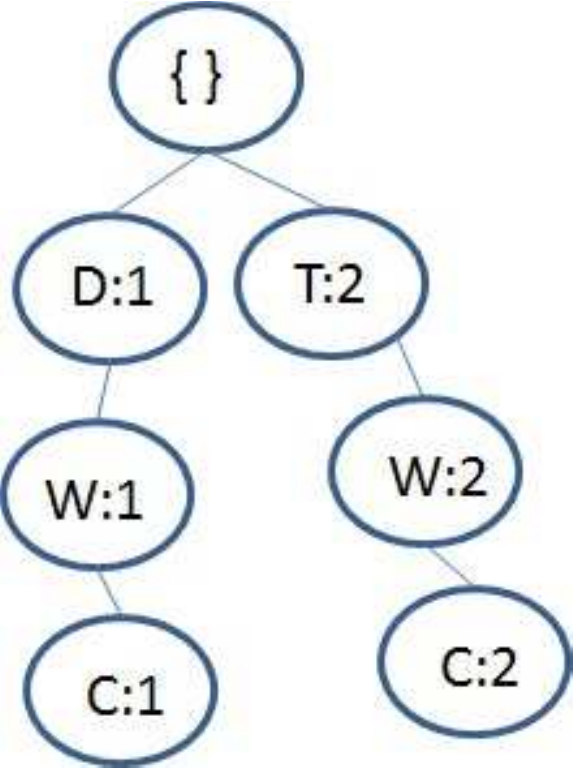}
\label{fig:proj1}}
\qquad
\subfloat[]{\includegraphics[width=0.6\columnwidth]{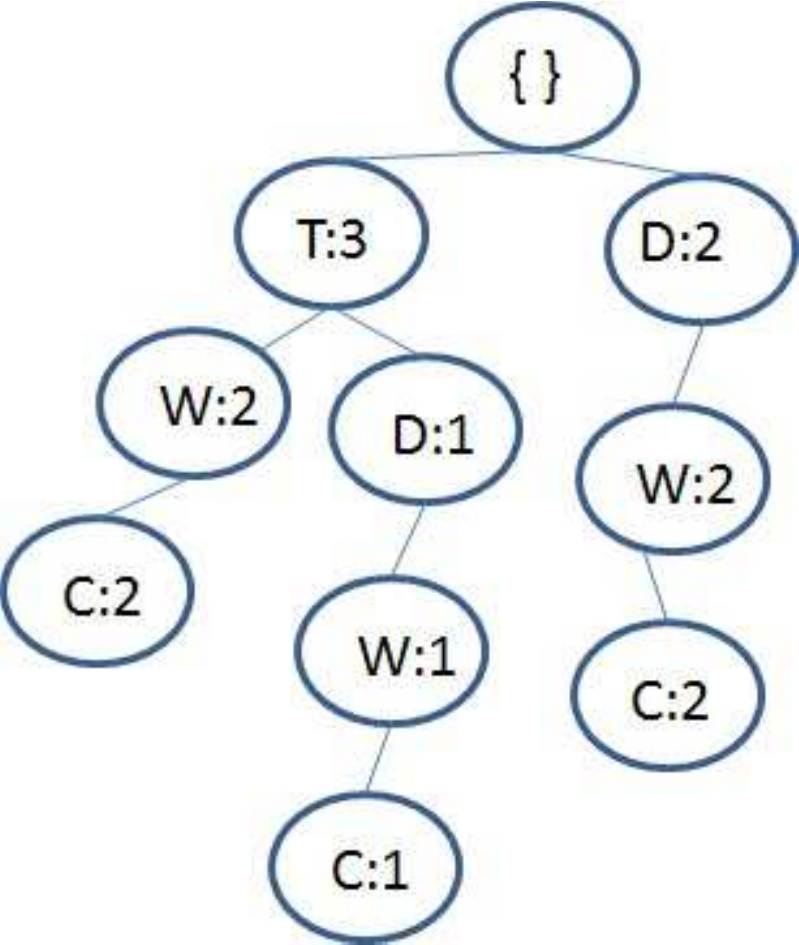}
\label{fig:resd1}}
}
\caption{(a) Projected tree $T_{P_A}$ and (b) Residual tree $T_{R_A}$ of item
$A$ for the IFP-tree shown in Figure~\ref{fig:full_ifp}.}
\label{fig:projresd1}
\end{figure}

\begin{figure}[t]
\centerline{
\subfloat[]{\includegraphics[width=0.3\columnwidth]{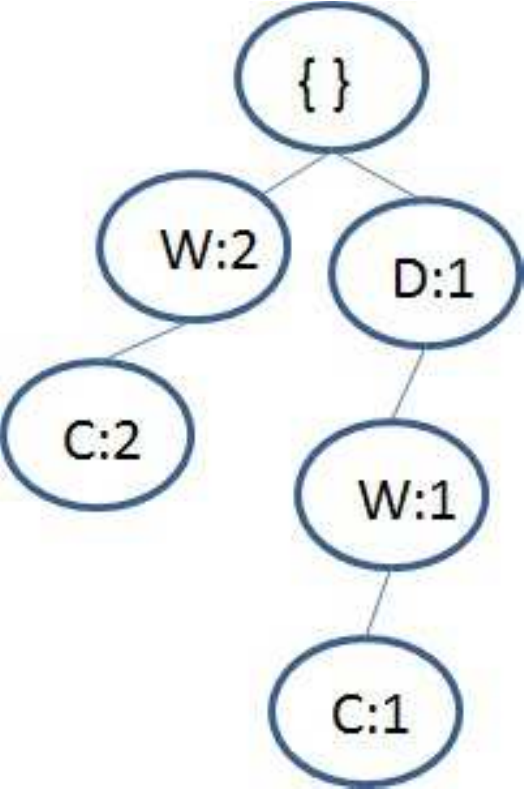}
\label{fig:proj2}}
\qquad
\subfloat[]{\includegraphics[width=0.3\columnwidth]{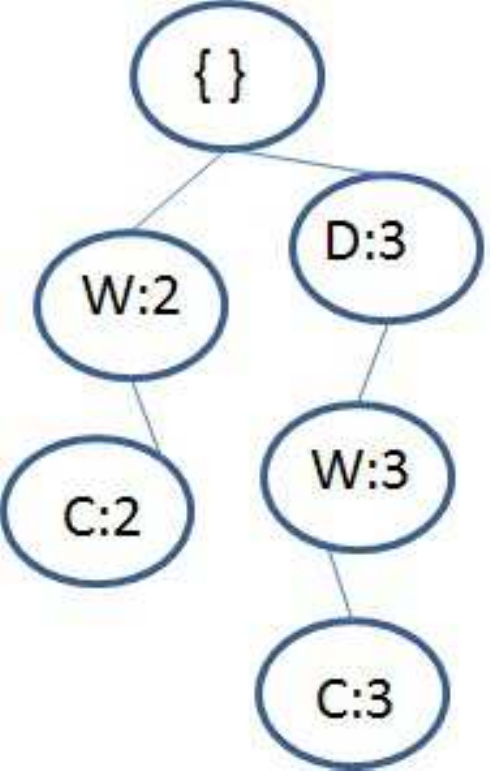}
\label{fig:resd2}}
}
\caption{(a) Projected tree and (b) Residual tree of
item $E$ in the residual tree $T_{R_A}$ shown in Figure~\ref{fig:resd1}.}
\label{fig:projresd2}
\end{figure}

\begin{algorithm}[t]
\caption{IFP\_min}
\label{algo:ifp}
\begin{algorithmic}[1]
\REQUIRE $T$ is an IFP-tree\\
\ENSURE Minimally infrequent itemsets (MIIs) in $db(T)$
\STATE $infreq(T) \leftarrow$ infrequent 1-itemsets
\STATE $T \leftarrow$ IFP-tree($db(T) - infreq(T)$)
\STATE $x \leftarrow$ least frequent item in $T$ (the \emph{lf-item})
\IF {$T$ has a single node} 
	\IF {$\{x\}$ is infrequent in $T$}
		\RETURN {$\{x\}$}
	\ELSE
  		\RETURN {$\varnothing$}
	\ENDIF
\ENDIF				 		
\STATE $T_{P_x} \leftarrow$ projected tree of $x$
\STATE $T_{R_x} \leftarrow$ residual tree of $x$
\STATE $S_{R} \leftarrow$ IFP\_min (residual tree $T_{R_x})$
\STATE $S_{R_{infreq}} \leftarrow$ $S_{R}$
\STATE $S_{P} \leftarrow$ IFP\_min (projected tree $T_{P_x})$
\STATE $S_{P_{infreq}} \leftarrow$ $\{x\} \bullet (S_{P} - S_{R})$
\STATE $S_{2}(x) \leftarrow$ $\{x\} \bullet \left(items(T_{R_x}) - items(T_{P_x})\right)$
\RETURN $\{S_{R_{infreq}} \cup S_{P_{infreq}} \cup S_{2}(x) \cup infreq(T)\}$
\end{algorithmic}
\end{algorithm}

\section{Mining Minimally Infrequent Itemsets}
\label{sec:algo}

In this section, we describe the IFP\_min algorithm that uses a recursive
approach to mine minimally infrequent itemsets (MIIs).  The steps of the
algorithm are shown in Algorithm~\ref{algo:ifp}.

The algorithm uses a dot-operation ($\bullet$) that is used to unify sets.  The
unification of the first set with the second produces another set whose
$i^\text{th}$ element is the union of the entire first set with the
$i^\text{th}$ element of the second set.  Mathematically, $\{x\} \bullet \{S_1,
\dots, S_n\}$ $= \{\{x\} \cup S_1, \dots, \{x\} \cup S_n\}$ and $\{x\} \bullet
\varnothing = \varnothing$.

The infrequent 1-itemsets are trivial MIIs and so they are reported and pruned
from the database in Step 1.  After this step, all the items present in the
modified database are individually frequent.  IFP\_min then selects the least
frequent item (lf-item, Step 3) and divides the database into two non-disjoint
sets: projected database and residual database of the lf-item (Step 11 and Step
12 respectively).

The IFP\_min algorithm is then applied to the residual database in Step 13 and
the corresponding MIIs are reported in Step 14. In the base case (Step 6 to Step
12) when the residual database consists of a single item, the MII reported is
either the item itself or the empty set accordingly as the item is infrequent or
frequent respectively.

After processing the residual database, the algorithm mines the projected
database in Step 15. The itemsets in the projected database share the lf-item as
a prefix. The MIIs obtained from the projected database by recursively applying
the algorithm are compared with those obtained from residual database. If an
itemset is found to occur in the second set, it is not reported; otherwise, the
lf-item is included in the itemset and is reported as an MII of the original
database (Step 16).

IFP\_min also reports the 2-itemsets consisting of the lf-item and frequent
items not present in the projected database of the lf-item (Step 17). These
2-itemsets have support zero in the actual database and hence also qualify as
MIIs.

\subsection{Example}
\label{sec:example}

Consider the transaction database $TD$ shown in Table~\ref{tab:exmii}.
Figure~\ref{fig:example2} shows the recursive partitioning of the tree $T$
corresponding to $TD$. The box associated with each tree represents the MIIs in
that tree for $\sigma = 2$.

The algorithm starts by separating the infrequent items from the database. This
results in removal of the item $F$ (which is inherently a MII). The lf-item in
the modified tree $T$ is $A$. The algorithm then constructs the projected tree
$T_{P_A}$ and the residual tree $T_{R_A}$ corresponding to item $A$ and
recursively processes them to yield MIIs containing $A$ and MIIs not containing
$A$ respectively.

\begin{itemize}

	\item MIIs not containing $A$ (itemsets obtained from $T_{R_A}$)
	
		The lf-item in $T_{R_A}$ is $E$. Therefore, similar to the first step,
		$T_{R_A}$ is again divided into projected tree $T_{P_E}$ and residual
		tree $T_{R_E}$, which are then recursively mined.

	\begin{itemize}
		
		\item MIIs not containing $E$ (itemsets obtained from $T_{R_E}$)

			Every 1-itemset is frequent.  By recursively processing the residual
			and projected trees of $B$, $\{B, D\}$ is obtained as a MII.
			Itemset $\{C, D\}$ is also obtained as a potential MII.  However,
			since it is frequent, it is not returned.  Since $\{B, C\}$ is also
			frequent, only $\{B, D\}$ is returned from this tree.
	
		\item MIIs containing $E$ (itemsets obtained from $T_{P_E}$)

			All the 1-itemsets $\{B\}, \{C\}, \{D\}$ are infrequent (Step 6 of
			the algorithm).  $E$ is included with these itemsets to return $\{E,
			B\}, \{E, C\}, \{E, D\}$.

	\end{itemize}

		$T_{P_E}$ and $T_{R_E}$ are mutually exclusive. Hence, the combined set
		$\{\{B, D\}, \{E, B\}, \{E, C\}, \{E, D\}\}$ forms the MIIs not
		containing $A$.

	\item MIIs containing $A$ (itemsets obtained from $T_{P_A}$)

		Item $\{E\}$ is infrequent ($support = 0$).  Hence, $\{A, E\}\}$ forms a
		MII.  Similarly, itemset $\{B, D\}$ with $support = 0$ is also obtained
		as a potential MII.  The other MIIs obtained from recursive processing
		are $\{B, C\}, \{C, D\}$.  The itemset $\{B, D\}$, however, appears as a
		MII in both $T_{P_A}$ and $T_{R_A}$.  Hence, it is removed (Step 16 of
		the algorithm). This avoids the inclusion of the itemset $\{A, B, D\}$
		which is not a MII since $\{B, D\}$ is infrequent (as shown in
		$T_{R_A}$).  $A$ is included with the remaining set of MIIs from
		$T_{P_A}$ to form the itemsets $\{A, B, C\}, \{A, C, D\}$.
		
		The combined set $\{\{A, B, C\}, \{A, C, D\}, \{A, E\}\}$ thus forms the
		MIIs containing $A$.

\end{itemize}

As mentioned in Step 3 of the algorithm, the infrequent 1-itemset $\{F\}$ is
also included.  Hence, in all, the algorithm returns the set $\{\{B, D\},$ $\{E,
B\},$ $\{E, C\},$ $\{E, D\},$ $\{A, B, C\},$ $\{A, C, D\},$ $\{A, E\},$
$\{F\}\}$ as the MIIs for the database $TD$.

\begin{figure*}
\centering
\includegraphics[width=\textwidth]{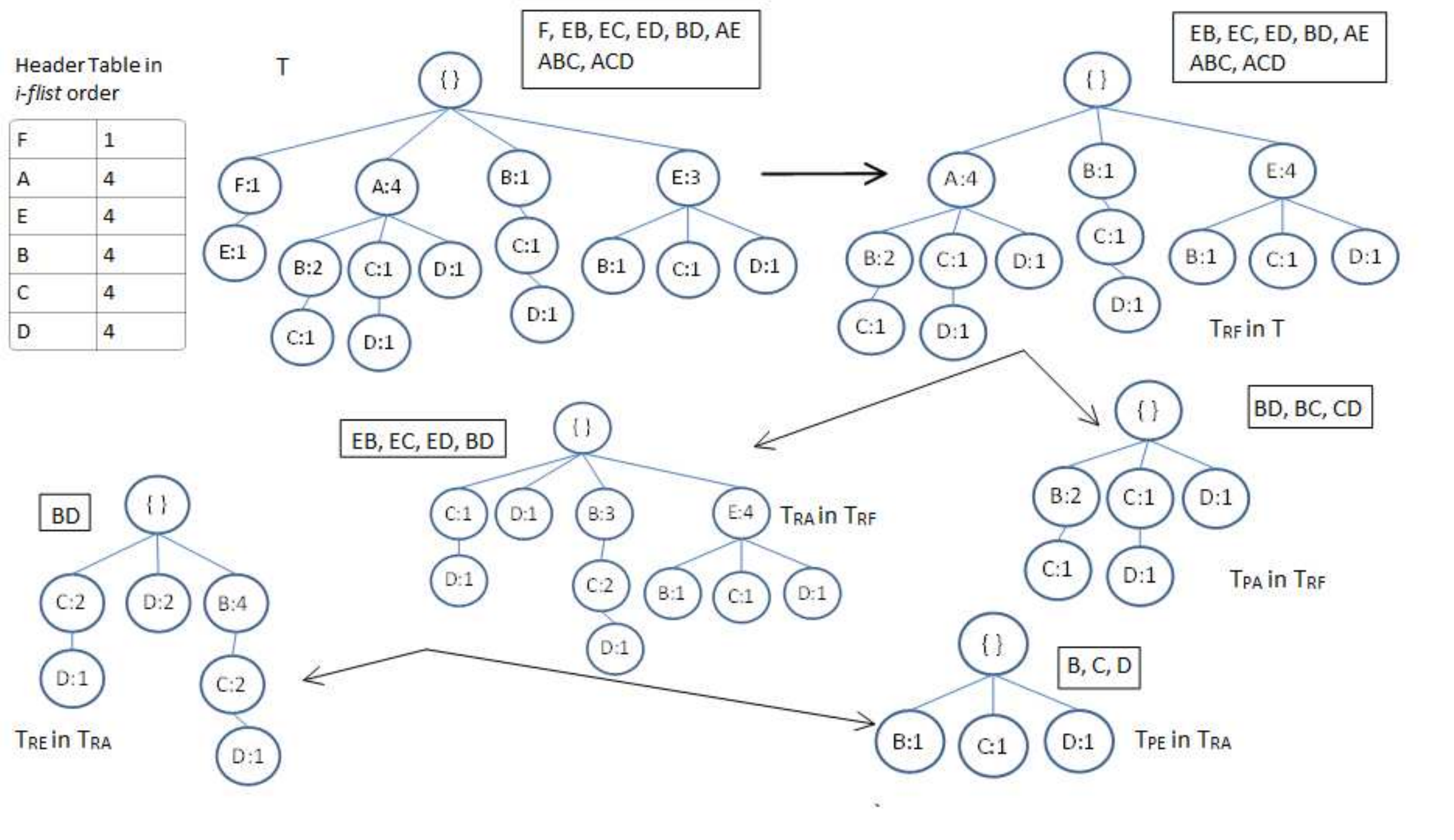}
\caption{Running example for the transaction database in Table~\ref{tab:exmii}.}
\label{fig:example2}
\end{figure*}

\subsection{Completeness and Soundness}
\label{sec:explain}

In this section, we prove formally the our algorithm is \emph{complete} and
\emph{sound}, i.e., it returns all minimally infrequent itemsets and all
itemsets returned by it are minimally infrequent.

Consider the lf-item $x$ in the transaction database. MIIs in the database can
be exhaustively divided into the following sets:

\begin{itemize}

	\item Group 1: MIIs not containing $x$

		This can be again of two kinds:

		\begin{enumerate}[(a)]

			\item Itemsets of length 1:
				
				These constitute the infrequent items in the database obtained
				from the initial pruning.

			\item Itemsets of length greater than 1:
				
				These constitute the minimally infrequent itemsets obtained from
				the residual tree of $x$.

		\end{enumerate}

	\item Group 2: MIIs containing $x$

		This consists of itemsets of the form $\{x\} \cup S$ where $S$ can be of
		following two kinds:

		\begin{enumerate}[(a)]

			\item $S$ occurs with $x$ in at least one transaction:
			
				All items in $S$ occur in the projected tree of $x$.

			\item $S$ does not occur with $x$ in any transaction:

				Note that the path corresponding to $S$ does not exist in the
				projected tree of $x$.  Also, $S$ is a 1-itemset.  Assume the
				contrary, i.e., $S$ contains two or more items.  A subset of $S$
				would then exist, which would not occur with $x$ in any
				transaction.  As a result, the subset would also be infrequent
				and $\{x\} \cup S$ would not be qualified as a MII.  Thus, $S$
				is a node that is absent in the projected tree of $x$.

		\end{enumerate}

\end{itemize}

The following set of observations and theorems prove the correctness of the
IFP\_min algorithm. They exhaustively define and verify the set of itemsets
returned by the algorithm.

The first observation relates the (in)frequent itemsets of the database to those
present in the residual database.

\begin{observation}
	\label{obs:rfreq}
	An itemset $S$ (not containing the item $x$) is frequent (infrequent) in the
	residual tree $T_{R_x}$ if and only if the itemset $S$ is frequent
	(infrequent) in $T$, i.e.,
	\begin{center}
		$S$ is frequent in $T$ $\Leftrightarrow$ $S$ is frequent
		in $T_{R_x}$ ($x \notin S$) \\
		$S$ is infrequent in $T$ $\Leftrightarrow$ $S$ is 
		infrequent in $T_{R_x}$ ($x \notin S$)
	\end{center}
\end{observation}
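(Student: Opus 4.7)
The plan is to reduce the statement to a direct count-preservation argument, using the fact that the residual database is just the original database with the single item $x$ deleted from every transaction.

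First, I would unpack definitions: by the construction in Section~\ref{sec:resd}, the database $db(T_{R_x})$ associated with the residual tree consists of the transactions $\{T_i \setminus \{x\} : T_i \in db(T)\}$, and the IFP-tree encodes this multiset faithfully (the counts along root-to-node paths are preserved by the merging step). Hence for any itemset $Y$, $supp(Y, T_{R_x}) = |\{i : Y \subseteq T_i \setminus \{x\}\}|$.

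Second, I would establish the key set-theoretic equivalence: for any itemset $S$ with $x \notin S$ and any transaction $T_i$,
\[
S \subseteq T_i \iff S \subseteq T_i \setminus \{x\}.
\]
The reverse direction is immediate since $T_i \setminus \{x\} \subseteq T_i$. The forward direction uses $x \notin S$: if $S \subseteq T_i$ then every element of $S$ lies in $T_i$ and is distinct from $x$, so it lies in $T_i \setminus \{x\}$. Summing the indicator of this event over all transactions yields $supp(S, T) = supp(S, T_{R_x})$.

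Third, comparing both sides with the threshold $\sigma$ gives the two biconditionals in the statement simultaneously: $supp(S, T) \geq \sigma \iff supp(S, T_{R_x}) \geq \sigma$ for frequency, and its negation for infrequency. There is no real obstacle here; the only point worth being careful about is the implicit claim that the IFP-tree construction is lossless with respect to support counts of subsets, which I would briefly justify by noting that each transaction contributes $+1$ to the count of every node on its root-to-leaf path, so the support of any itemset can be read off the tree exactly as from the flat transaction list.
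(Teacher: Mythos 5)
Your proposal is correct and follows essentially the same route as the paper: both reduce the claim to the support-preservation identity $supp(S, T) = supp(S, T_{R_x})$ for $x \notin S$ and then compare both sides against the threshold $\sigma$. Your transaction-level justification (that $S \subseteq T_i \iff S \subseteq T_i \setminus \{x\}$ when $x \notin S$) is in fact a more careful statement of why that identity holds than the paper's own one-line remark about occurrences of $S$ lying only in the residual tree.
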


\begin{proof}
	Since $S$ does not contain $x$, it does not occur in the projected tree of
	$x$. All occurrences of $S$ must, therefore, be only in the residual tree
	$T_{R_x}$, i.e.,
	\begin{align}
		\label{eq:1}
		supp(S, T) = supp(S, T_{R_x}) \quad [x \notin S]
	\end{align}
	Hence, $S$ is frequent (infrequent) in $T$ if and only if $S$ is frequent
	(infrequent) in $T_{R_x}$.
\end{proof}

The following theorem shows that the MIIs that do not contain the item $x$ can
be obtained directly as MIIs from the residual tree of $x$.

\begin{theorem}
	\label{thm:resdMII}
	An itemset $S$ (not containing the item $x$) is minimally infrequent in $T$
	if and only if the itemset $S$ is minimally infrequent in the residual tree
	$T_{R_x}$ of $x$, i.e.,
	\begin{center}
		$S$ is minimally infrequent in $T$ \\
		$\Leftrightarrow$ $S$ is minimally infrequent in $T_{R_x}$ ($x \notin
		S$)
	\end{center}			
\end{theorem}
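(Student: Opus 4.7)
The plan is to derive the theorem as an almost immediate consequence of Observation~\ref{obs:rfreq}. Recall that, by definition, an itemset $S$ is minimally infrequent in a tree $U$ exactly when (i)~$supp(S, U) < \sigma$ and (ii)~$supp(S', U) \geq \sigma$ for every proper subset $S' \subsetneq S$. Being a MII is thus determined entirely by the supports of $S$ and of the finitely many proper subsets of $S$, so it suffices to show that all of these supports agree in $T$ and in $T_{R_x}$.

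Here the hypothesis $x \notin S$ does all the work. If $x \notin S$ then $x \notin S'$ for every proper subset $S' \subsetneq S$, so Observation~\ref{obs:rfreq} applies simultaneously to $S$ and to each such $S'$, giving
\begin{align*}
supp(S, T) < \sigma &\iff supp(S, T_{R_x}) < \sigma, \\
supp(S', T) \geq \sigma &\iff supp(S', T_{R_x}) \geq \sigma \quad (\forall S' \subsetneq S).
\end{align*}
Conjoining these equivalences yields precisely the biconditional claimed by the theorem, handling both directions at once without the need for separate ``$\Rightarrow$'' and ``$\Leftarrow$'' arguments.

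The only subtlety worth flagging is that the family of proper subsets $S' \subsetneq S$ is intrinsic to $S$ and does not depend on which tree we are working in, so the universal quantifier ``all proper subsets'' in the MII definition picks out exactly the same collection in $T$ and in $T_{R_x}$; in particular we never need to compare subsets that contain $x$, for which Observation~\ref{obs:rfreq} would be silent. Beyond this sanity check I do not anticipate any real obstacle: once Observation~\ref{obs:rfreq} is in hand, the theorem is essentially a one-line invocation.
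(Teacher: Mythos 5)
Your proof is correct and follows essentially the same route as the paper's: both apply Observation~\ref{obs:rfreq} to $S$ and to each of its proper subsets (all of which avoid $x$ because $S$ does), and conclude that the minimally-infrequent property transfers in both directions. Your presentation as a single conjunction of biconditionals is slightly tidier than the paper's separate forward and converse paragraphs, but the underlying argument is identical.
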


\begin{proof}
	Suppose $S$ is minimally infrequent in $T$.  Therefore, it is itself
	infrequent, but all its subsets $S' \subset S$ are frequent.

	As $S$ does not contain $x$, all occurrences of $S$ or any of its subsets
	$S' \subset S$ must occur in the residual tree $T_{R_x}$ only.  Hence, using
	Observation~\ref{obs:rfreq}, in $T_{R_x}$ also, $S$ is infrequent but all
	$S' \subset S$ are frequent.  Therefore, $S$ is minimally infrequent in
	$T_{R_x}$ as well.

	The converse is also true, i.e., if $S$ is minimally infrequent in
	$T_{R_x}$, since all its occurrences are in $T_{R_x}$ only, it is globally
	minimally infrequent (i.e., in $T$) as well.
\end{proof}

Theorem~\ref{thm:resdMII} guides the algorithm in mining MIIs not containing the
least frequent item (lf-item) $x$.  The algorithm makes use of this theorem
recursively, by reporting MIIs of residual trees as MIIs of the original tree.
For mining of MIIs that contain $x$, the following observation and theorem are
presented.

The second observation relates the (in)frequent itemsets of the database to
those present in the projected database.

\begin{observation}
	\label{obs:pfreq}
	An itemset $S$ is frequent (infrequent) in the projected tree $T_{P_x}$ if
	and only if the itemset obtained by including $x$ in $S$ (i.e., $x \cup S$),
	is frequent (infrequent) in $T$, i.e.,
	\begin{center}
		$x \cup S$ is frequent in $T$ $\Leftrightarrow$ $S$ is frequent in
		$T_{P_x}$ \\
		$x \cup S$ is infrequent in $T$ $\Leftrightarrow$ $S$ is infrequent in
		$T_{P_x}$
	\end{center}
\end{observation}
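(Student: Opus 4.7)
The plan is to mirror the strategy of Observation~\ref{obs:rfreq}, this time exploiting the defining property of the projected tree. The entire biconditional reduces to a single support identity, namely $supp(S, T_{P_x}) = supp(\{x\} \cup S, T)$ for any itemset $S$ with $x \notin S$ (this side condition is automatic in $T_{P_x}$, since the item $x$ has been stripped from every transaction of the projected database). Once this identity is in hand, both equivalences --- frequent in $T_{P_x}$ iff $\{x\} \cup S$ frequent in $T$, and the analogous statement for infrequent --- follow immediately by comparing each side against the common threshold $\sigma$.

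To establish the identity, I would first invoke the definition from Section~\ref{sec:proj}: the projected database $TD_x$ consists of exactly those transactions of $TD$ that contain $x$, with the item $x$ deleted from each. This gives a natural bijection between transactions of $TD_x$ and those transactions of $TD$ that contain $x$. For any $S$ with $x \notin S$, a transaction $T' \in TD_x$ contains $S$ if and only if its pre-image $T \in TD$ contains $\{x\} \cup S$; deletion of $x$ is harmless here precisely because $x \notin S$. Counting across this bijection yields $supp(S, T_{P_x}) = supp(\{x\} \cup S, T)$, from which both halves of the observation are immediate.

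The only real subtlety is bookkeeping at the IFP-tree level, since the projected tree is defined structurally (as the subtree rooted at the $x$-node) rather than being rebuilt from scratch from $TD_x$. The step that must be verified carefully is that the counts stored along the subtree rooted at $x$ indeed reconstruct the transaction multiplicities of the projected database. This holds because each transaction passing through the unique $x$-node contributes exactly one to the count of every node on its path below $x$, so traversing this subtree enumerates the projected transactions with the correct multiplicities. With that small structural check done, the support identity --- and hence the observation --- is direct, and no further machinery is required.
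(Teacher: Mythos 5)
Your proposal is correct and follows essentially the same route as the paper: both reduce the observation to the single support identity $supp(S, T_{P_x}) = supp(\{x\} \cup S, T)$, justified by the fact that all occurrences of $\{x\} \cup S$ lie in the projected transactions with $x$ stripped out. Your extra bookkeeping about the subtree counts is a harmless elaboration the paper leaves implicit.
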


\begin{proof}
	Consider the itemset $x \cup S$.  All occurrences of it are only in the
	projected tree $T_{P_x}$.  The projected tree, however, does not list $x$,
	and therefore, we have,
	\begin{align}
		\label{eq:2}
		supp(x \cup S, T) = supp(x \cup S, T_{P_x}) = supp(S, T_{P_x})
	\end{align}
	Hence, $x \cup S$ is frequent (infrequent) in $T$ if and only if $S$ is
	frequent (infrequent) in $T_{P_x}$.
\end{proof}

The next theorem shows that the potential MIIs obtained from the projected tree
of an item $x$ (by appending $x$ to it) is a MII provided it is not a MII in the
corresponding residual tree of $x$.

\begin{theorem}
	\label{thm:projMII}
	An itemset $\{x\} \cup S$ is minimally infrequent in $T$ if and only if the
	itemset $S$ is minimally infrequent in the projected tree $T_{P_x}$ but not
	minimally infrequent in the residual tree $T_{R_x}$, i.e.,
	\begin{center}
		$\{x\} \cup S$ is minimally infrequent in $T$ \\
		$\Leftrightarrow$ $S$ is minimally infrequent in $T_{P_x}$ and \\
		$S$ is not minimally infrequent in $T_{R_x}$
	\end{center}			
\end{theorem}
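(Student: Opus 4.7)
The plan is to prove the biconditional in two directions, leaning heavily on Observations~\ref{obs:rfreq} and~\ref{obs:pfreq} to translate (in)frequency statements between $T$, $T_{P_x}$, and $T_{R_x}$. Throughout I will assume $x \notin S$; otherwise $\{x\} \cup S$ equals $S$ and the statement degenerates. It is also worth fixing the counting identities up front: for any itemset $S$ with $x \notin S$, we have $\mathrm{supp}(S,T)=\mathrm{supp}(S,T_{R_x})$ and $\mathrm{supp}(\{x\}\cup S,T)=\mathrm{supp}(S,T_{P_x})$.

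For the forward direction, I would assume $\{x\}\cup S$ is a MII in $T$. Applying Observation~\ref{obs:pfreq}, the infrequency of $\{x\}\cup S$ in $T$ gives the infrequency of $S$ in $T_{P_x}$. For any $S'\subsetneq S$, the itemset $\{x\}\cup S'$ is a proper subset of $\{x\}\cup S$, so it is frequent in $T$, and Observation~\ref{obs:pfreq} yields that $S'$ is frequent in $T_{P_x}$; hence $S$ is a MII in $T_{P_x}$. Finally, $S$ itself is a proper subset of $\{x\}\cup S$, so $S$ is frequent in $T$, and Observation~\ref{obs:rfreq} gives that $S$ is frequent (and therefore certainly not minimally infrequent) in $T_{R_x}$.

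For the converse, assume $S$ is a MII in $T_{P_x}$ but not a MII in $T_{R_x}$. Observation~\ref{obs:pfreq} immediately promotes the infrequency of $S$ in $T_{P_x}$ to the infrequency of $\{x\}\cup S$ in $T$. The proper subsets of $\{x\}\cup S$ fall into three families, and I would dispose of them in turn: (i) for $\{x\}\cup S'$ with $S'\subsetneq S$, Observation~\ref{obs:pfreq} combined with minimality of $S$ in $T_{P_x}$ gives frequency in $T$; (ii) for $S''\subsetneq S$, frequency in $T_{P_x}$ plus the monotonicity $\mathrm{supp}(S'',T)\ge \mathrm{supp}(\{x\}\cup S'',T)$ gives frequency in $T$; (iii) the subset $S$ itself is the delicate case.

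The main obstacle is that last family: showing $S$ itself is frequent in $T$ is precisely where the hypothesis ``$S$ is not a MII in $T_{R_x}$'' earns its keep. I plan to do a case split on why $S$ fails to be a MII in $T_{R_x}$. If $S$ is frequent in $T_{R_x}$, then Observation~\ref{obs:rfreq} gives $S$ frequent in $T$ and we are done. Otherwise $S$ is infrequent in $T_{R_x}$ but some proper subset $S'\subsetneq S$ is also infrequent in $T_{R_x}$. Observation~\ref{obs:rfreq} makes $S'$ infrequent in $T$, but $S'\subsetneq S$ implies $S'$ is frequent in $T_{P_x}$ (since $S$ is a MII there), and then Observation~\ref{obs:pfreq} together with $\mathrm{supp}(S',T)\ge \mathrm{supp}(\{x\}\cup S',T)$ forces $S'$ frequent in $T$, a contradiction. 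Hence the second case cannot occur, $S$ is frequent in $T$, and combining with the earlier cases concludes that $\{x\}\cup S$ is a MII in $T$.
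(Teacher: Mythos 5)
Your proof is correct and follows essentially the same route as the paper's: both directions hinge on Observations~\ref{obs:pfreq} and~\ref{obs:rfreq}, with the same case analysis on whether a candidate infrequent subset contains $x$, and the same key step of using the ``not minimally infrequent in $T_{R_x}$'' hypothesis to force $S$ itself to be frequent in $T$. The only difference is organizational --- you verify the converse directly by enumerating the proper subsets of $\{x\}\cup S$, whereas the paper argues by contradiction (and routes the frequency of $S$ through Theorem~\ref{thm:resdMII} rather than directly through Observation~\ref{obs:rfreq}) --- which is a cosmetic, arguably cleaner, presentation of the same argument.
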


\begin{proof}

	\emph{LHS to RHS:}

	Suppose $\{x\} \cup S$ is minimally infrequent in $T$.  Therefore, it is
	itself infrequent, but all its subsets $S' \subset \{x\} \cup S$, including
	$S$, are frequent.

	Since $S$ is frequent and it does not contain $x$, using
	Observation~\ref{obs:rfreq}, $S$ is frequent in $T_{R_x}$ and is, therefore,
	not minimally infrequent in $T_{R_x}$.

	From Observation~\ref{obs:pfreq}, $S$ is infrequent in $T_{P_x}$.  Assume
	that $S$ is not minimally infrequent in $T_{P_x}$.  Since it is infrequent
	itself, there must exist a subset $S' \subset S$ which is infrequent in
	$T_{P_x}$ as well.  Now, consider the itemset $\{x\} \cup S'$.  From
	Observation~\ref{obs:pfreq}, it must be infrequent in $T$.  However, since
	this is a subset of $\{x\} \cup S$, this contradicts the fact that $\{x\}
	\cup S$ is minimally infrequent.  Therefore, the assumption that $S$ is not
	minimally infrequent in $T_{P_x}$ is false.

	Together, it shows that if $\{x\} \cup S$ is minimally infrequent in $T$,
	then $S$ is minimally infrequent in $T_{P_x}$ but not in $T_{R_x}$.

	\emph{RHS to LHS:}

	Given that $S$ is minimally infrequent in $T_{P_x}$ but not in $T_{R_x}$,
	assume that $\{x\} \cup S$ is not minimally infrequent in $T$.  Since $S$ is
	infrequent in $T_{P_x}$, using Observation~\ref{obs:pfreq}, $\{x\} \cup S$
	is also infrequent in $T$.
	
	Now, since we have assumed that $\{x\} \cup S$ is not minimally infrequent
	in $T$, it must contain a subset $A \subset \{x\} \cup S$ which is
	infrequent in $T$ as well.

	Suppose $A$ contains $x$, i.e., $x \in A$.  Consider the itemset $B$ such
	that $A = \{x\} \cup B$.  Note that since $A \subset \{x\} \cup S$, $B
	\subset S$.  Since $A = \{x\} \cup B$ is infrequent in $T$, from
	Observation~\ref{obs:pfreq}, $B$ is infrequent in $T_{P_x}$.  However, since
	$B \subset S$, this contradicts the fact that $S$ is minimally infrequent in
	$T_{P_x}$.  Hence, $A$ cannot contain $x$.

	Therefore, it must be the case that $x \notin A$.  To show that this leads
	to a contradiction as well, we first show that
	
	Now, if $S$ is minimally infrequent in $T_{P_x}$ but not in $T_{R_x}$ and
	$\{x\} \cup S$ is not minimally infrequent in $T$, then every subset $S'
	\subset S$ is frequent in $T_{P_x}$.  Thus, $\forall S' \subset S, S'$ is
	frequent in $T$ as well (since $T_{P_x}$ is only a part of $T$).  Then, if
	$S$ is infrequent, it must be a MII.  However, from
	Theorem~\ref{thm:resdMII}, it becomes a MII in $T_{R_x}$ which is a
	contradiction.  Therefore, $S$ cannot be infrequent and is, therefore,
	frequent in $T$.

	Since, $A \subset \{x\} \cup S$ but $x \notin A$, therefore, $A \subset S$.
	We have already shown that $A$ is infrequent in $T$.  Using the Apriori
	property, $S$ cannot then be frequent.  Hence, it contradicts the original
	assumption that $\{x\} \cup S$ is not minimally infrequent in $T$.

	Together, we get that if $S$ is minimally infrequent in $T_{P_x}$ but not in
	$T_{R_x}$, then $\{x\} \cup S$ is minimally infrequent in $T$.
\end{proof}

Theorem~\ref{thm:projMII} guides the algorithm in mining MIIs containing the
least frequent item (lf-item) $x$.  The algorithm first obtains the MIIs from
its projected tree and then removes those that are also found in the residual
tree.  It thus shows the connection between the two parts of the database,
projected and residual.

\subsection{Correctness}

We now formally establish the correctness of the algorithm IFP\_min by showing
that the MIIs as enumerated in Group 1 and Group 2 in Section~\ref{sec:explain}
are generated by it.

In Step 1, the algorithm first finds the infrequent items present in the tree.
These 1-itemsets cover the Group 1(a).

Consider the least frequent item $x$.  In Step 11 and Step 12, the tree $T$ is
divided into smaller trees, the residual tree $T_{R_x}$ and the projected tree
$T_{P_x}$.

In Step 13, Group 1(b) MIIs are obtained by the recursive application of
IFP\_min on the residual tree $T_{R_x}$.  Theorem~\ref{thm:resdMII} proves that
these are MIIs in the original dataset.

In Step 14, potential MIIs are obtained by the recursive application of IFP\_min
on the projected tree $T_{P_x}$.  MIIs obtained from $T_{R_x}$ are removed from
this set.  Combined with the item $x$, these form the MIIs enumerated as Group
2(a).  Theorem~\ref{thm:projMII} proves that these are indeed MIIs in the
original dataset.

The projected database consist of all those transactions in which $x$ is
present.  The Group 2(b) MIIs are of length 2 (as shown earlier).  Thus, single
items that are frequent but do not appear in the projected tree of $x$, when
combined with $x$, constitute MIIs with support count of zero.  These items
appear in $T_{R_x}$ though as they are frequent.  Hence, they are obtained as
single items that appear in $T_{R_x}$ but not in $T_{P_x}$ as shown in Step 17
of the algorithm.

The algorithm is, hence, complete and it exhaustively generates all minimally
infrequent itemsets.

\subsection{MIIs using Apriori}
\label{sec:apriori}

In this section, we show how the Apriori algorithm~\cite{ragrawal} can be
improved to mine the MIIs.  Consider the iteration where candidate itemsets of
length $l+1$ are generated from frequent itemsets of length $l$.  From the
generated candidate set, itemsets whose support satisfies the minimum support
threshold are reported as frequent and the rest are rejected.  This rejected set
of itemsets constitute the MIIs of length $l+1$.  This is attributed to the fact
that for such an itemset, all the subsets are frequent (due to the candidate
generation procedure) while the itemset itself is infrequent.  For the
experimentation purposes, we label this algorithm as the \emph{Apriori\_min}
algorithm.

\section{Frequent Itemsets in MLMS Model}
\label{sec:MLMS}

\begin{algorithm}[t]
\caption{IFP\_MLMS}
\label{algo:mlms}
\begin{algorithmic}[1]
\REQUIRE IFP-tree $T$ with $\rho_T = p$\\
\ENSURE Frequent* itemsets of $T$
\IF{$T = \varnothing$} 
	\RETURN $\varnothing$
\ENDIF				 		
\STATE $x \leftarrow$ ls-item in $T$
\IF{$supp(\{x\}, T) < \sigma_{low}$} 
	\STATE $S_P \leftarrow \varnothing$
\ELSE
	\STATE $T_{P_x} \leftarrow$ projected tree of $x$
	\STATE $\rho_{T_{P_x}} \leftarrow p+1$
	\STATE $S_P \leftarrow$ IFP\_MLMS ($T_{P_x})$
\ENDIF				 		
\STATE $T_{R_x} \leftarrow$ residual tree of $x$
\STATE $\rho_{T_{R_x}} \leftarrow p$
\STATE $S_R \leftarrow$ IFP\_MLMS ($T_{R_x})$
\IF{$x$ is \emph{frequent}* in $T$}
	\RETURN $\left(\{x\} \bullet S_P\right) \cup S_R \cup \{x\}$
\ELSE
	\RETURN $\left(\{x\} \bullet S_P\right) \cup S_R$
\ENDIF
\end{algorithmic}
\end{algorithm}

In this section, we present our proposed algorithm \emph{IFP\_MLMS} to mine
frequent itemsets in the MLMS framework.  Though most applications use the
constraint $\sigma_1 \geq \sigma_2 \geq \dots \geq \sigma_n$ for the different
support thresholds at different lengths, our algorithm (shown in
Algorithm~\ref{algo:mlms}) does not depend on it and works for any general
$\sigma_1, \dots, \sigma_n$.

We use the lowest support $\sigma_{low} = \min_{\forall i} \sigma_i$ in the
algorithm.  The algorithm is based on the item with \emph{least support} which
we term as the \emph{ls-item}.

IFP\_MLMS is again based on the concepts of residual and projected trees.  It
mines the frequent itemsets by first dividing the database into projected and
residual trees for the ls-item $x$, and then mining them recursively.  We show
that the frequent itemsets obtained from the residual tree are frequent itemsets
in the original database as well.

The itemsets obtained from the projected tree share the ls-item as a prefix.
Hence, the thresholds cannot be applied directly as the length of the itemset
changes.  The prefix accumulates as the algorithm goes deeper into recursion,
and hence, a track of the prefix is maintained at each recursion level.  At any
stage of the recursive processing, if $supp(ls$-$item) < \sigma_{low}$, then
this item cannot occur in a frequent itemset of any length (as any superset of
it will not pass the support threshold).  Thus, its sub-tree is pruned, thereby
reducing the search space considerably.

\begin{figure*}[t]
\centering
\includegraphics[width=1.0\textwidth]{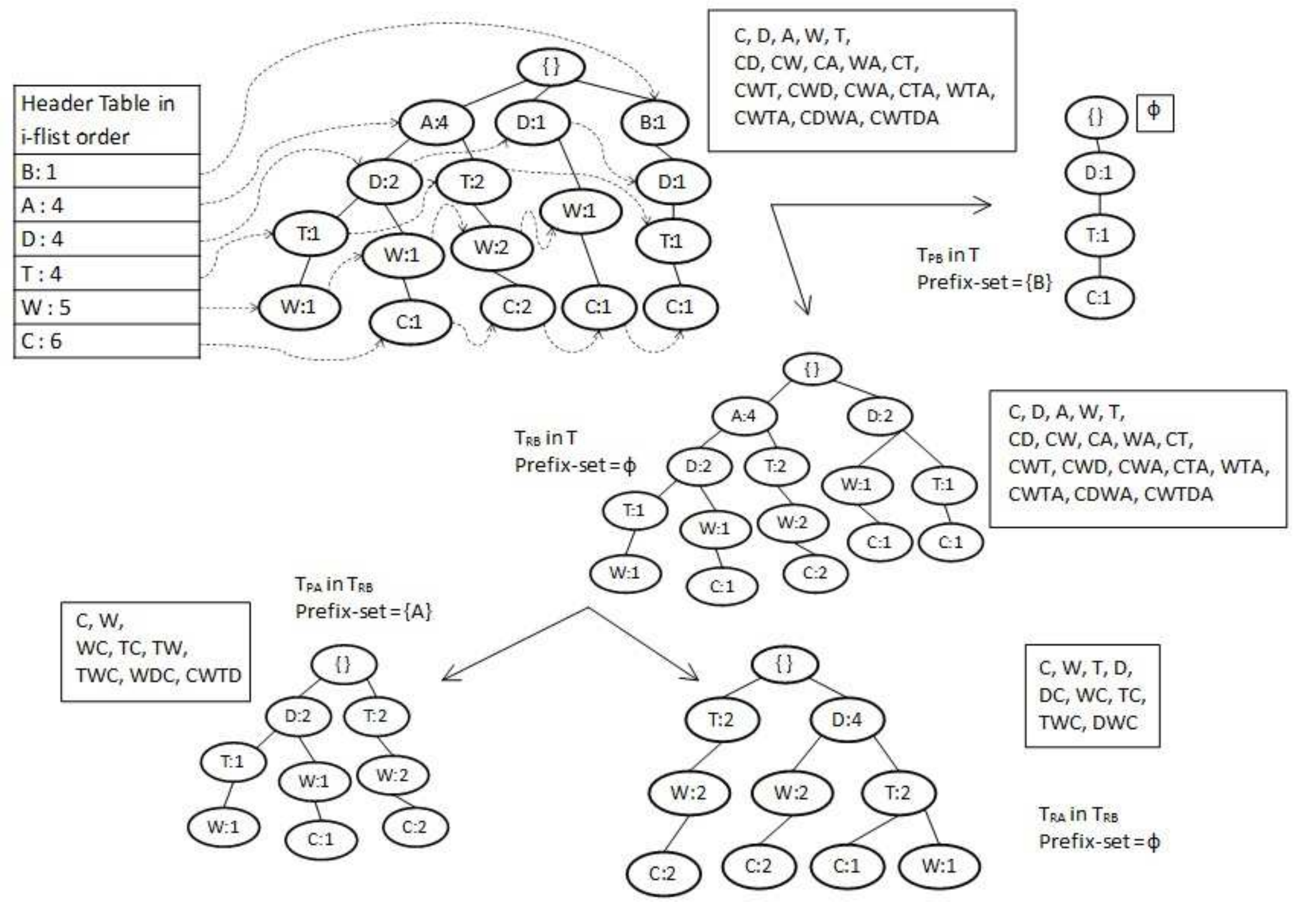}
\caption{Frequent itemsets generated from the database represented in
Table~\ref{tab:exmlms} using the MLMS framework. The box associated with each
tree represents the frequent* itemsets in that tree.}
\label{fig:all_freq}
\end{figure*}

To analyze the prefix items corresponding to a tree, the following two
definitions are required:

\begin{definition}[Prefix-set of a tree]
	The \emph{prefix-set} of a tree is the set of items that need to be included
	with the itemsets in the tree, i.e., all the items on which the projections
	have been done.  For a tree $T$, it is denoted by $\Delta_T$.
\end{definition}

\begin{definition}[Prefix-length of a tree]
	The \emph{prefix-length} of a tree is the length of its prefix-set.  For a
	tree $T$, it is denoted by $\rho_T$.
\end{definition}

For a tree $T$ having $\Delta_T = S$ and $\rho_T = p$, the corresponding values
for the residual tree are $\Delta_{T_{R_x}} = S$ and $\rho_{T_{R_x}} = p$ while
those for the projected tree are $\Delta_{T_{P_x}} = \{x\} \cup S$ and
$\rho_{T_{P_x}} = p+1$.  For the original transaction database $TD$ and its tree
$T$, $\Delta_T = \varnothing$ and $\rho_T = 0$.

For any $k$-itemset $S$ in a tree $T$, the original itemset must include all the
items in the prefix-set.  Therefore, if $\rho_T = p$, for $S$ to be frequent, it
must satisfy the support threshold for $k+p$-length itemsets, i.e., $supp(X)
\geq \sigma_{k+p}$ (henceforth, $\sigma_{k,p}$ is also used to denote
$\sigma_{k+p}$).  The definitions of frequent and infrequent itemsets in a tree
$T$ with a prefix-length $\rho_T = p$ are, thus, modified as follows.

\begin{definition}[Frequent* itemset]
	A k-itemset $S$ is \emph{frequent}* in $T$ having $\rho_T = p$ if $supp(S,
	T) \geq \sigma_{k,p}$.
\end{definition}

\begin{definition}[Infrequent* itemset]
	A k-itemset $S$ is \emph{infrequent}* in $T$ having $\rho_T = p$ if $supp(S,
	T) < \sigma_{k,p}$.
\end{definition}

Using these definitions, we explain Algorithm~\ref{algo:mlms} along with an
example shown in Figure~\ref{fig:all_freq} for the database in
Table~\ref{tab:exmlms}.
The support thresholds are
$\sigma_1 = 4$, $\sigma_2 = 4$, $\sigma_3 = 3$, $\sigma_4 = 2$, and $\sigma_5 = 1$.

The item with least support for the initial tree $T$ is $B$.  Since its support
is above $\sigma_{low}$ (Step 5), the algorithm extracts the projected tree
$T_{P_B}$ and then recursively processes it (Step 11 to Step 13).

The algorithm then processes the residual tree $T_{R_B}$ recursively (Step 12 to
Step 14).  For that, it breaks it into the projected and residual trees of the
ls-item there, which is $A$.  Figure~\ref{fig:all_freq} shows all the frequent
itemsets mined.

Since $B$ itself is not frequent in $T$ (Step 15), it is not returned.  The
other itemsets are returned (Step 18).

\subsection{Correctness of IFP\_MLMS}
\label{mlmsComplete}

Let $x$ be the ls-item in tree $T$ (having $\rho_T = p$) and $S$ be a
\emph{k-itemset} not containing $x$.

For computing the \emph{frequent}* itemsets for a tree $T$, the IFP\_MLMS
algorithm merges the \emph{frequent}* itemsets, obtained by the processing of
the projected tree and residual tree of ls-item, using the following theorem 

\begin{theorem}
	\label{thm:pmlms}
	An itemset $\{x\} \cup S$ is frequent* in $T$ if and only if $S$ is
	frequent* in the projected tree $T_{P_x}$ of $x$, i.e.,
	\begin{center}
		$\{x\}\cup S$ is frequent* in $T \Leftrightarrow S$ is frequent* in
		$T_{P_x}$
	\end{center}
\end{theorem}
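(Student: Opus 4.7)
The plan is to unfold the definition of frequent* on both sides of the biconditional and then invoke Observation~\ref{obs:pfreq} to equate the two relevant supports. Let $S$ be a $k$-itemset not containing $x$, so that $\{x\} \cup S$ is a $(k+1)$-itemset. Since $\rho_T = p$, by the definition of frequent*, $\{x\} \cup S$ is frequent* in $T$ iff $supp(\{x\} \cup S, T) \geq \sigma_{(k+1)+p}$. On the projected side, the construction of $T_{P_x}$ adds $x$ to the prefix-set and hence increments the prefix-length, giving $\rho_{T_{P_x}} = p+1$; therefore $S$ (still a $k$-itemset) is frequent* in $T_{P_x}$ iff $supp(S, T_{P_x}) \geq \sigma_{k+(p+1)}$.

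The crux of the argument is the arithmetic identity $\sigma_{(k+1)+p} = \sigma_{k+(p+1)}$: the single item $x$ absorbed from the itemset into the prefix-set contributes equally on either side of the accounting, so the two thresholds coincide regardless of how the MLMS schedule $\sigma_1,\ldots,\sigma_n$ is chosen. Observation~\ref{obs:pfreq} then supplies $supp(\{x\} \cup S, T) = supp(S, T_{P_x})$, at which point the two frequency conditions become the same inequality and the biconditional is immediate.

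The main obstacle is conceptual rather than technical: one must carefully track the bookkeeping that the frequent* predicate depends jointly on the length of the itemset under consideration \emph{and} on the prefix-length of the tree it lives in, and confirm that moving $x$ out of the itemset into the prefix-set leaves the effective threshold invariant. This is in fact precisely why the algorithm maintains $\rho_T$ at every recursive call — the prefix-length records how many items have already been committed by projection, so that adding it to the length of the locally discovered itemset recovers the true length of the globally generated itemset and hence selects the correct support threshold.
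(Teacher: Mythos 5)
Your proof is correct and follows essentially the same route as the paper's: unfold the definition of frequent* on both sides, apply Observation~\ref{obs:pfreq} to get $supp(\{x\}\cup S, T) = supp(S, T_{P_x})$, and observe that the thresholds $\sigma_{(k+1)+p}$ and $\sigma_{k+(p+1)}$ coincide. The paper writes this as the identity $\sigma_{k,p+1} = \sigma_{k+1,p}$ in its shorthand, but the argument is the same.
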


\begin{proof}
	Suppose $S$ is a $k$-itemset. \\
	$S$ is \emph{frequent}* in $T_{P_x}$ \\
	$\Leftrightarrow$ $supp(S, T_{P_x}) \geq \sigma_{k,p+1}$ (since
	$\rho_{T_{P_x}} = p+1$) \\
	$\Leftrightarrow$ $supp(\{x\} \cup S, T) \geq \sigma_{k,p+1}$ (using
	Observation~\ref{obs:pfreq}) \\
	$\Leftrightarrow$ $supp(\{x\} \cup S, T) \geq \sigma_{k+1,p}$ \\
	$\Leftrightarrow$ $\{x\} \cup S$ is \emph{frequent}* in $T$ (since $\rho_{T}
	= p$).
\end{proof}

\begin{theorem}
	\label{thm:rmlms}
	An itemset $S$ (not containing $x$) is frequent* in $T$ if and only if $S$
	frequent* in the residual tree $T_{R_x}$ of $x$, i.e.,
	\begin{center}
		$S$ is frequent* in $T \Leftrightarrow S$ is frequent* in $T_{R_x}$
	\end{center}
\end{theorem}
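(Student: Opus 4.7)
The plan is to mirror the structure of the proof of Theorem~\ref{thm:pmlms}, but the argument is even cleaner because neither the support count nor the length of $S$ changes when we pass from $T$ to $T_{R_x}$. I will simply chain a sequence of equivalences, using the support identity (Observation~\ref{obs:rfreq}) together with the fact that the residual operation preserves the prefix-length.

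First, I would record the key invariant of the residual construction: when $T$ has $\rho_T = p$, the residual tree $T_{R_x}$ inherits exactly the same prefix-set, so $\rho_{T_{R_x}} = p$. This is the crucial difference from the projected case, where the prefix-length incremented by one; here the threshold against which we measure frequency is unchanged. Concretely, a $k$-itemset $S$ is frequent* in $T$ if and only if $supp(S,T) \geq \sigma_{k,p}$, and is frequent* in $T_{R_x}$ if and only if $supp(S, T_{R_x}) \geq \sigma_{k,p}$, with the same $\sigma_{k,p}$ on both sides.

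Next, since $x \notin S$, Observation~\ref{obs:rfreq} gives the support identity $supp(S,T) = supp(S, T_{R_x})$ (equation~\eqref{eq:1}). Chaining the equivalences yields
\begin{align*}
S \text{ is frequent* in } T
&\Leftrightarrow supp(S,T) \geq \sigma_{k,p} \\
&\Leftrightarrow supp(S, T_{R_x}) \geq \sigma_{k,p} \\
&\Leftrightarrow S \text{ is frequent* in } T_{R_x},
\end{align*}
which is exactly the claim.

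There is no real obstacle to overcome: the observation on support counts does all the work, and the preservation of prefix-length means no threshold arithmetic (unlike the shift from $\sigma_{k,p+1}$ to $\sigma_{k+1,p}$ in Theorem~\ref{thm:pmlms}) is needed. The only subtlety worth making explicit in the write-up is justifying $\rho_{T_{R_x}} = \rho_T$, which follows directly from the definition of the residual tree (no item is appended to the prefix-set when we merely delete $x$ from every transaction).
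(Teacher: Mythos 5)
Your proof is correct and matches the paper's argument exactly: both chain the equivalences $S$ frequent* in $T \Leftrightarrow supp(S,T) \geq \sigma_{k,p} \Leftrightarrow supp(S,T_{R_x}) \geq \sigma_{k,p} \Leftrightarrow S$ frequent* in $T_{R_x}$, using Observation~\ref{obs:rfreq} for the support identity and the fact that $\rho_{T_{R_x}} = \rho_T = p$. Your explicit remark that no threshold shift is needed (unlike the $\sigma_{k,p+1}$ to $\sigma_{k+1,p}$ step in Theorem~\ref{thm:pmlms}) is a nice clarification but does not change the substance.
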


\begin{proof} 
	Suppose $S$ is a $k$-itemset. \\
	$S$ is \emph{frequent}* in $T_{R_x}$ \\
	$\Leftrightarrow$ $supp(S,T_{R_x}) \geq \sigma_{k,p}$ (since $\rho_{T_{R_x}}
	= p$) \\
	$\Leftrightarrow$ $supp(S,T) \geq \sigma_{k,p}$ (using
	Observation~\ref{obs:rfreq}) \\
	$\Leftrightarrow$ $S$ is \emph{frequent}* in $T$ (since $\rho_{T} = p$).
\end{proof}

The algorithm IFP\_MLMS merges the following \emph{frequent}* itemsets:
(i)~\emph{frequent}* itemsets obtained by including $x$ with those returned from
the projected tree (shown to be correct by Theorem~\ref{thm:pmlms}),
(ii)~\emph{frequent}* itemsets obtained from the residual tree (shown to be
correct by Theorem~\ref{thm:rmlms}) and (iii)~\emph{1-itemset} $\{x\}$ if it is
\emph{frequent}* in $T$.  The root of the tree $T$ that represents the entire
database has a null prefix-set, and therefore, zero prefix-length.  Hence, all
\emph{frequent}* itemsets mined from that tree are the \emph{frequent} itemsets
in the original transaction database.

\section{Experimental Results}
\label{sec:experiments}

In this section, we report the experimental results of running our algorithms on
different datasets.  We first report the performance of \emph{IFP\_min}
algorithm in comparison with the \emph{Apriori\_min} and \emph{MINIT}
algorithms, followed by that of the \emph{IFP\_MLMS} algorithm.  The benchmark
datasets have been taken from Frequent Itemset Mining Implementations (FIMI)
repository \url{http://fimi.ua.ac.be/data/}.  All experiments were run on a
machine with Dual Core Intel Processor running at 2.4GHz with 8GB of RAM.

\subsection{IFP\_min}

The Accident dataset is characteristic of \emph{dense} and \emph{large}
datasets. Figure~\ref{fig:accident} shows the performance of different
algorithms on the dataset. The IFP\_min algorithm outperforms the MINIT and
Apriori\_min algorithm by exponential factors. Apriori\_min algorithm, due to
its inherent property of performing worse than the pattern growth based
algorithms on dense datasets, performs the worst.

\begin{figure}[t]
\centering
\includegraphics[width=\figwidth]{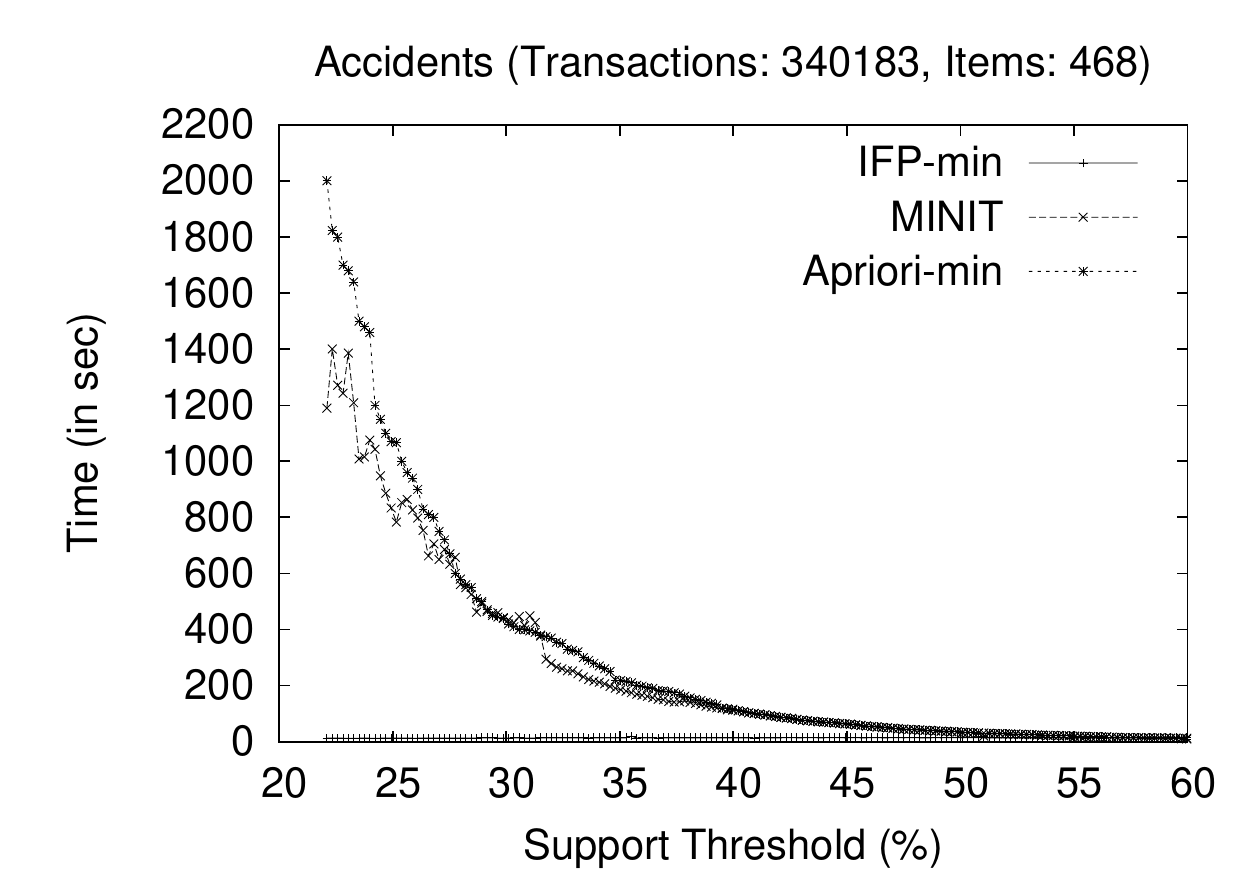}
\caption{Accident Dataset}
\label{fig:accident}
\end{figure}

The Connect (Figure~\ref{fig:connect}), Mushroom (Figure~\ref{fig:mushroom}) and
Chess (Figure~\ref{fig:chess}) datasets are characteristic of \emph{dense} and
\emph{small} datasets. The Apriori\_min algorithm achieves better reduction on
the size of candidate sets. However, when there exist a large number of frequent
itemsets, candidate generation-and-test methods may suffer from generating huge
number of candidates and performing several scans of database for
support-checking, thereby increasing the computational time. The corresponding
computational times have not been shown in the interest of maintaining the scale
of the graph present for IFP\_min and MINIT. 

As can be observed from the figures, dense and small datasets are characterized
by a neutral support threshold below which the MINIT algorithm performs better
than IFP\_min and above which IFP\_min performs better than MINIT.  The MINIT
algorithm prunes an item based on the support threshold and length of the
itemset in which the item is present (\emph{minimum support
property}~\cite{minit}). As the support thresholds are reduced, the pruning
condition becomes activated and leads to reduction in search space. Above the
neutral point, the pruning condition is not effective. In IFP\_min algorithm,
any candidate MII itemset is checked for set membership in a residual database
whereas in MINIT the candidates are validated by computing the support from the
whole database. Due to reduced validation space, IFP\_min outperforms MINIT.

\begin{figure}[t]
\centering
\includegraphics[width=\figwidth]{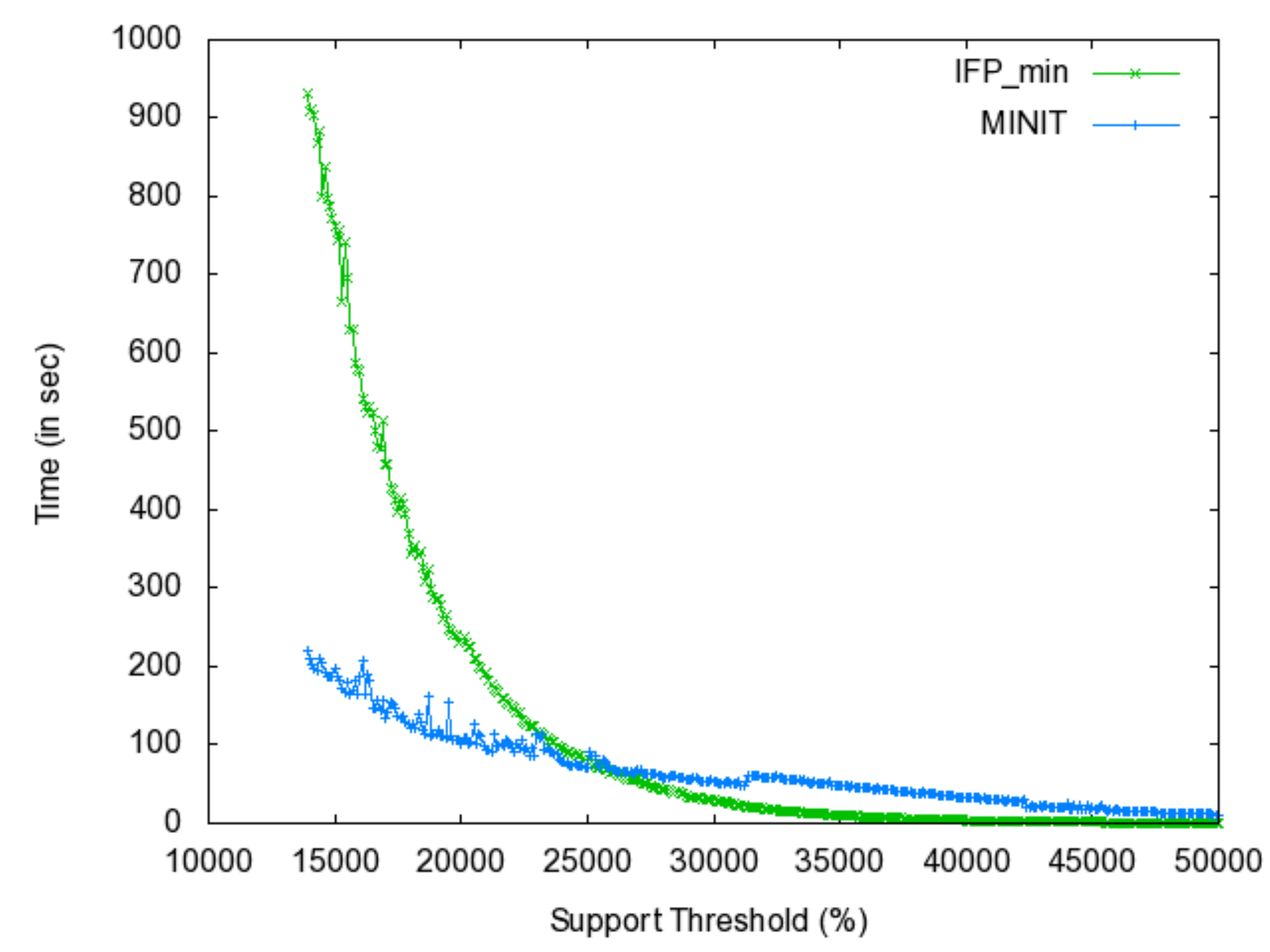}
\caption{Connect Dataset}
\label{fig:connect}
\end{figure}

\begin{figure}[t]
\centering
\includegraphics[width=\figwidth]{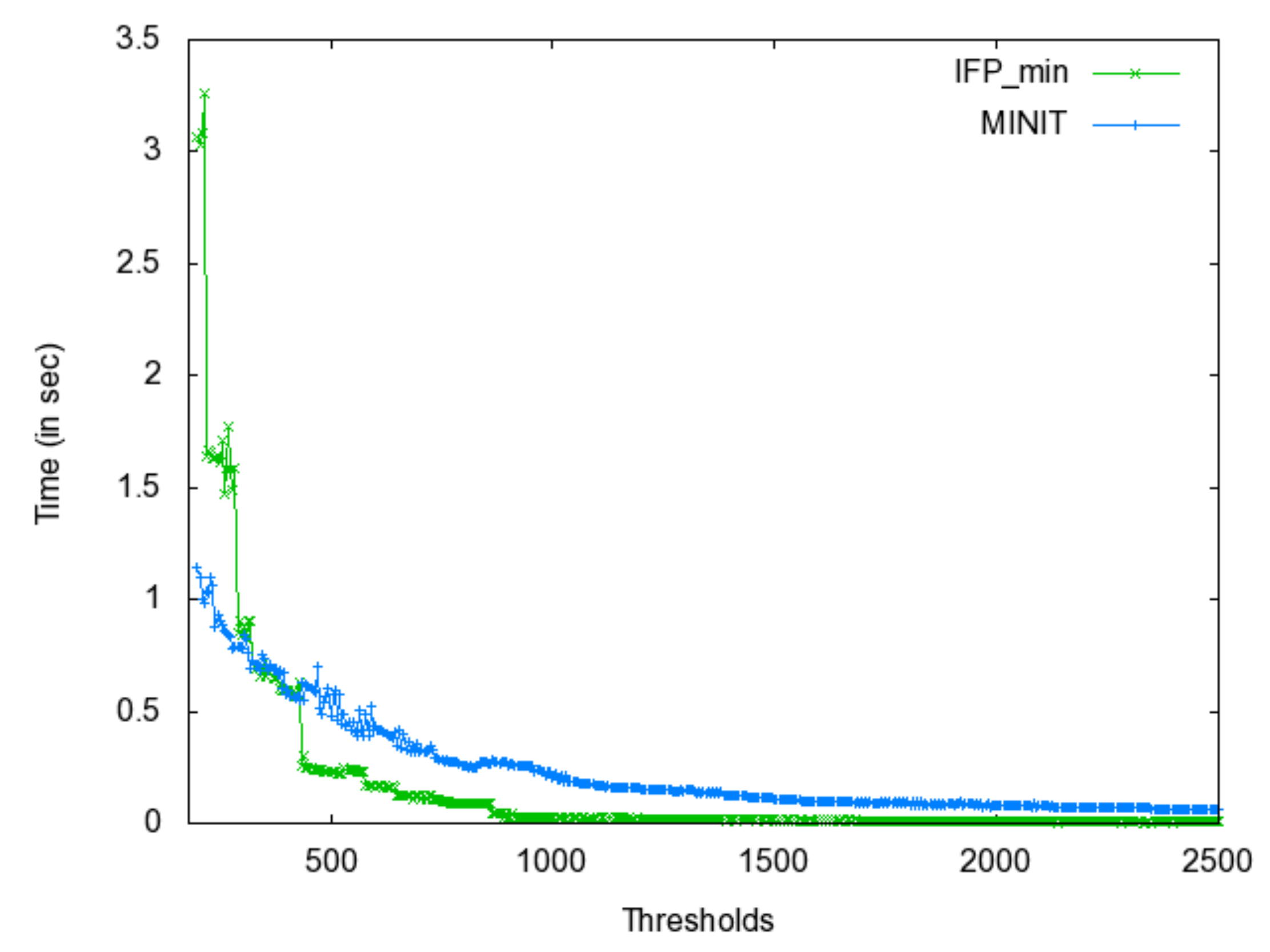}
\caption{Mushroom Dataset}
\label{fig:mushroom}
\end{figure}

\begin{figure}[t]
\centering
\includegraphics[width=\figwidth]{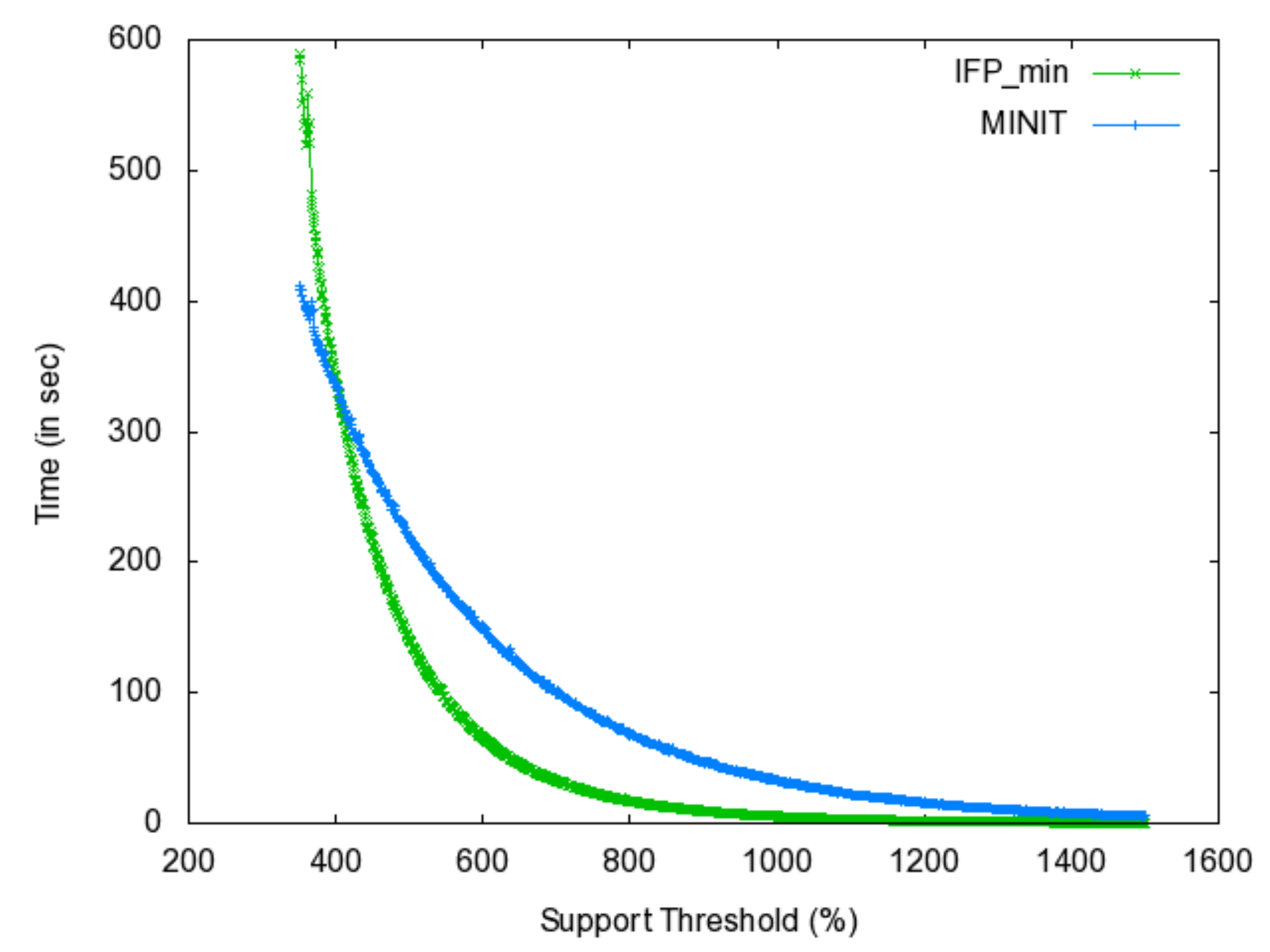}
\caption{Chess Dataset}
\label{fig:chess}
\end{figure}

The T10I4D100K (Figure~\ref{fig:t10}) and T40I10D100K (Figure~\ref{fig:t40}) are
\emph{sparse} datasets. Since Apriori\_min is a candidate-generation-and-test
based algorithm, it halts when all the candidates are infrequent. As such, it
avoids the complete traversal of the database for all possible lengths. However,
both IFP\_min and MINIT, being based on the recursive elimination procedure,
have to complete their full run in order to report the MIIs. This results in
higher computational times for these methods.

\begin{figure}[t]
\centering
\includegraphics[width=\figwidth]{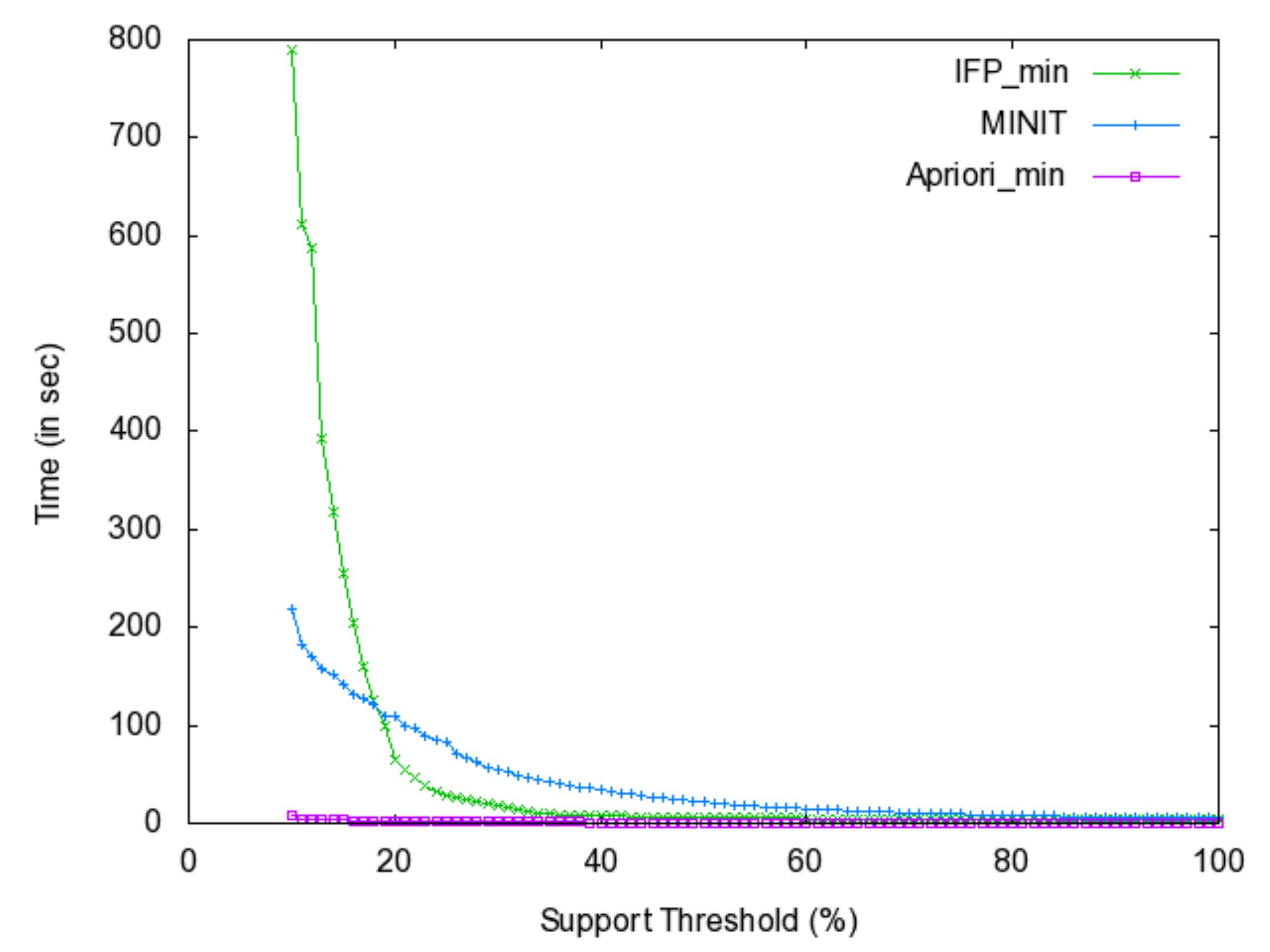}
\caption{T10I4D100K Dataset}
\label{fig:t10}
\end{figure}

\begin{figure}[t]
\centering
\includegraphics[width=\figwidth]{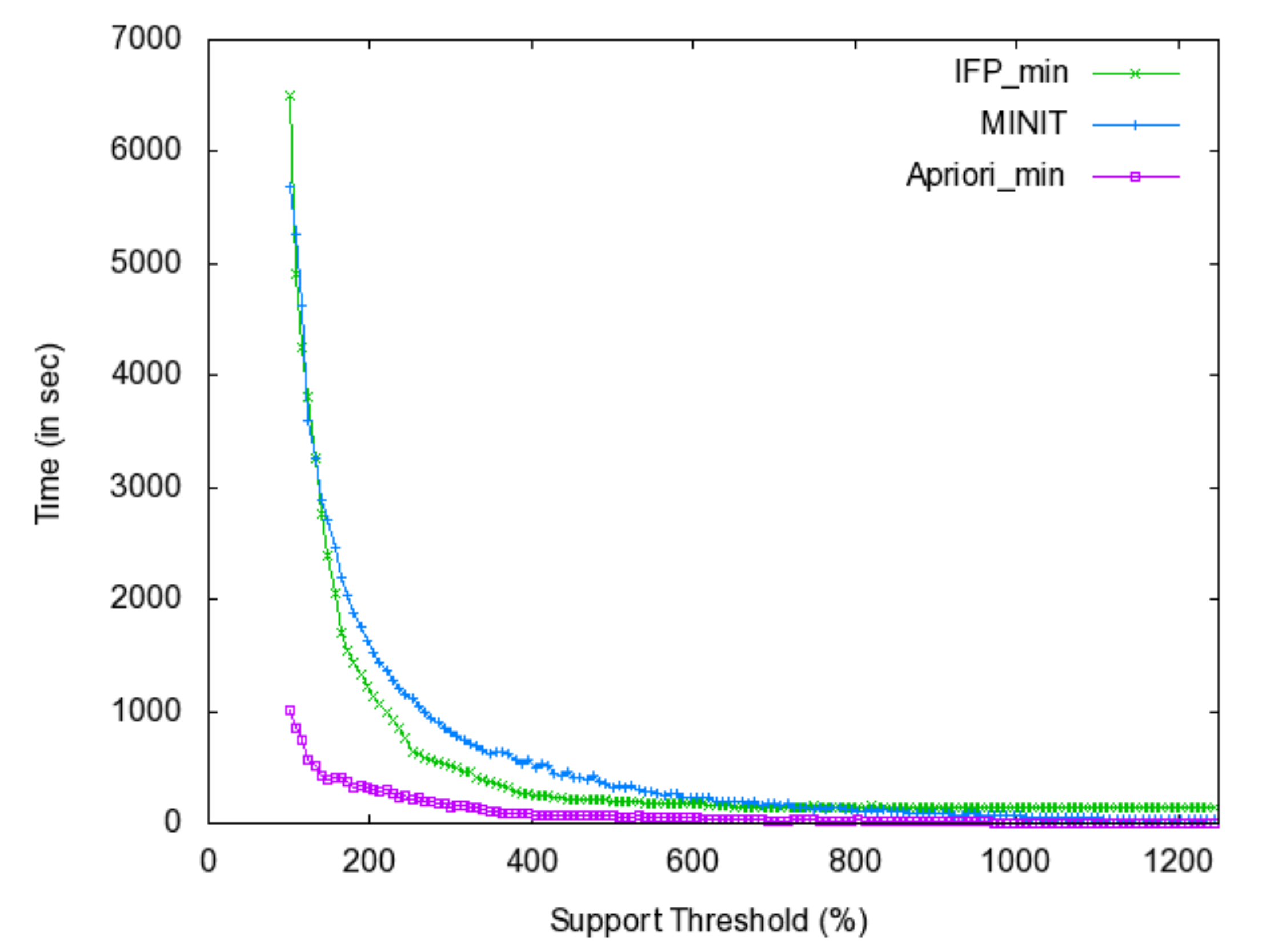}
\caption{T40I10D100K Dataset}
\label{fig:t40}
\end{figure}

On analyzing and comparing the MIIs generated by IFP\_min, Apriori\_min and
MINIT algorithms, we found that the MIIs belonging to Group 2b (i.e., the
itemsets having zero support threshold in the transaction database) are not
reported by the MINIT algorithm, thereby leading to its incompleteness. Based on
the experimental analysis, it is observed that for large dense datasets, it is
preferable to use IFP\_min algorithm. For small dense datasets, MINIT should be
used at low support thresholds and IFP\_min should be used at larger thresholds.
For sparse datasets, Apriori\_min should be used for reporting MIIs. 

\subsection{IFP\_MLMS}

In this section, we report the performance of IFP\_MLMS algorithm in comparison
with the Apriori\_MLMS~\cite{mlms} algorithm. Several real and synthetic
datasets (obtained from \url{http://archive.ics.uci.edu/ml/datasets}) have been
used for testing the performance of the algorithms. For dense datasets, due to
the presence of large number of transactions and items, the Apriori\_MLMS
algorithm crashes for lack of memory space.

\begin{figure}[t]
\centering
\includegraphics[width=\figwidth]{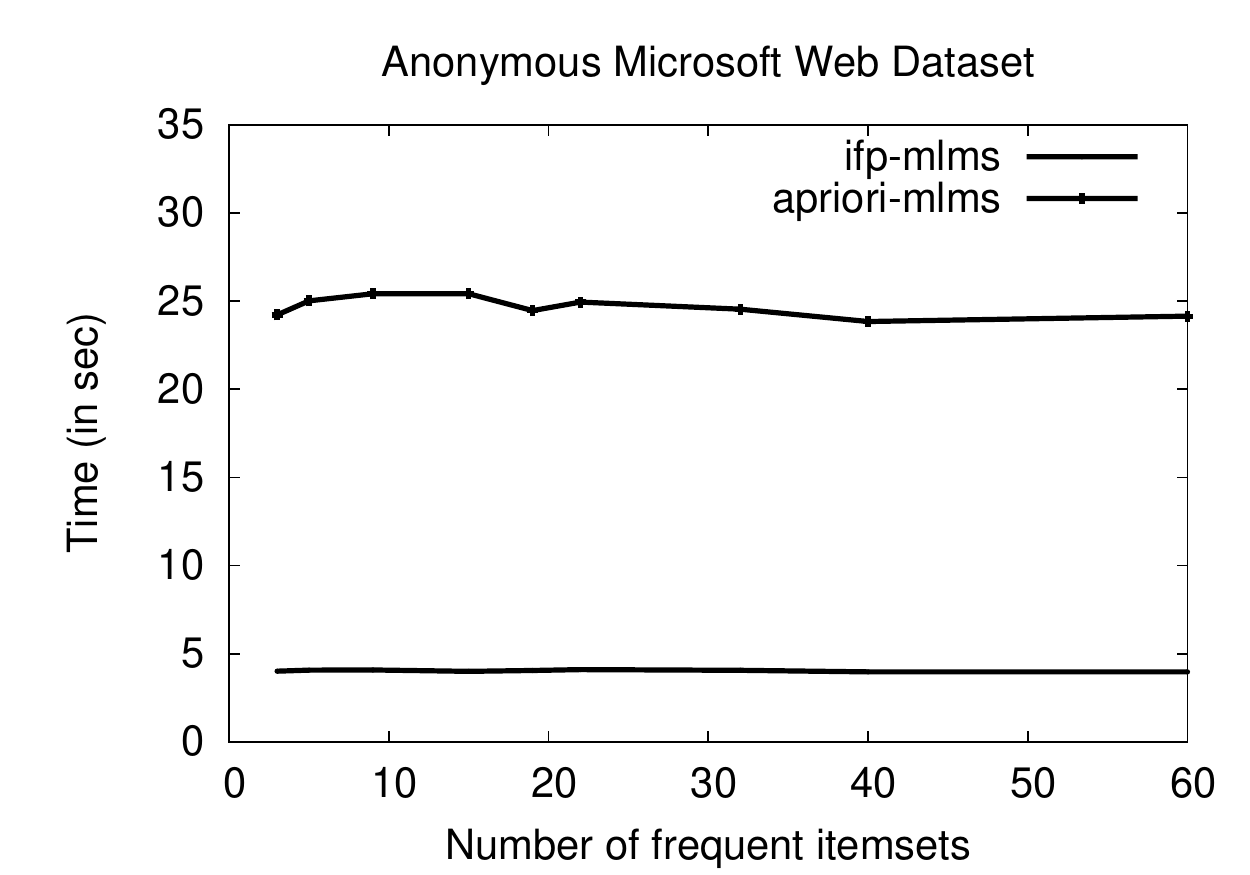}
\caption{The IFP\_MLMS and Apriori\_MLMS algorithms on the Anonymous Microsoft Web Dataset.}
\label{microsoft1}
\end{figure}

\begin{figure}[t]
\centering
\includegraphics[width=\figwidth]{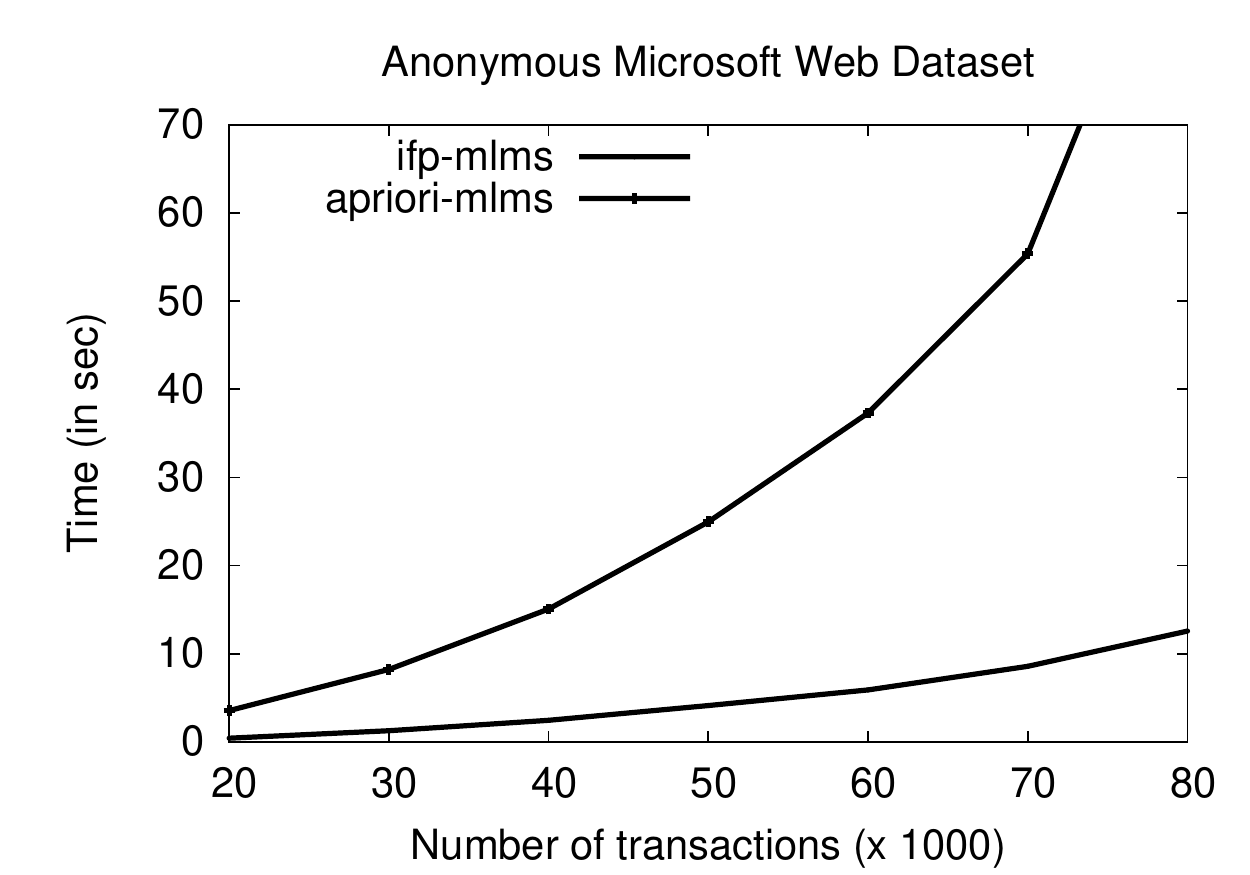}
\caption{The IFP\_MLMS and Apriori\_MLMS algorithms on the Anonymous Microsoft Web Dataset.}
\label{microsoft2}
\end{figure}

The first dataset we used is the Anonymous Microsoft Web dataset that records
areas of \url{www.microsoft.com} each user visited in a one week time frame in
February 1998
(\url{http://archive.ics.uci.edu/ml/datasets/Anonymous+Microsoft+Web+Data}).
The dataset consists of 1,31,666 transactions and 294 attributes.

Figure~\ref{microsoft1} plots the performance analysis of IFP\_MLMS and
Apriori\_MLMS algorithms.  The minimum support thresholds for itemsets of
different lengths were varied over a distribution window from 2\% to 20\% at
regular intervals.  The graph clearly shows the superiority of IFP\_MLMS as
compared to Apriori\_MLMS. We also observe that the time taken by the algorithm
to compute the frequent itemsets is roughly independent of the minimum support
thresholds for different lengths.

\begin{figure}[t]
\centering
\includegraphics[width=\figwidth]{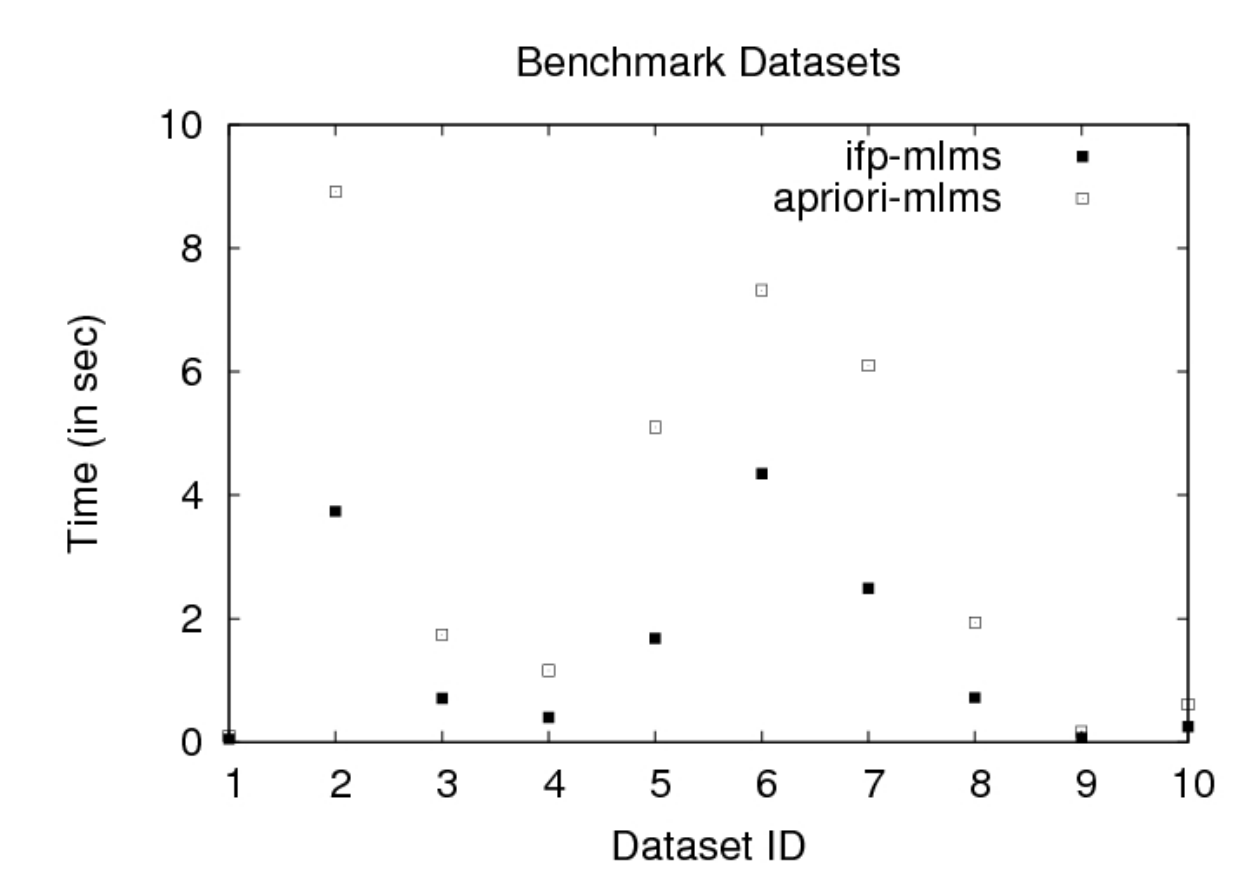}
\caption{The IFP\_MLMS and Apriori\_MLMS algorithms on the datasets given in
Table~\ref{tab:dataset}.}
\label{bench10}
\end{figure}

For the plot shown in Figure~\ref{microsoft2}, the number of transactions are
varied in the Anonymous Microsoft Web dataset.  We observe that the running time
for both the algorithms increases with the number of transactions when the
support thresholds are kept between 3\% to 10\%. However, the rate of increase
in time for Apriori\_MLMS algorithm is much higher and at 80,000 transactions,
Apriori\_MLMS crashes due to lack of memory.

Since the Apriori\_MLMS fails to give results for large datasets, smaller
datasets were obtained from \url{http://archive.ics.uci.edu/ml/datasets/} for
comparison purposes with the IFP\_MLMS. The characteristics of these datasets
are shown in Table~\ref{tab:freqmlms}. For each such dataset, the corresponding
time for IFP\_MLMS and Apriori\_MLMS are plotted in Figure~\ref{bench10}. The
support threshold percentages are kept the same for all the datasets and were
varied between 10\% and 60\% for the different length itemsets.

\begin{table}[t]
\centering
\begin{tabular}{|c|c|c|c|}
\hline
\multirow{2}{*}{ID} & \multirow{2}{*}{Dataset} & Number of & Number of \\
& & items & transactions \\
\hline
\hline
1 & machine & 467 & 209\\
2 & vowel\_context & 4188 & 990\\
3 & abalone & 3000 & 4177\\
4 & audiology & 311 & 202\\
5 & anneal & 191 & 798\\
6 & cloud & 2057 & 11539\\
7 & housing & 2922 & 506\\
8 & forest\_fires & 1039 & 518\\
9 & nursery & 28 & 12961\\
10 & yeast & 1568 & 1484\\
\hline
\end{tabular}
\caption{Details of smaller datasets.}
\label{tab:dataset}
\end{table}

The above results clearly show that the IFP\_MLMS algorithm outperforms
Apriori\_MLMS. During experimentation, we found the running time of both
IFP\_MLMS and Apriori\_MLMS algorithms to be independent of support thresholds
for the MLMS model. This behavior is attributed to the absence of downward
closure property~\cite{ragrawal} of frequent itemsets in the MLMS model, unlike
that of the single threshold model.

Further, consider an alternative FP-Growth algorithm for the MLMS model that
mines all frequent itemsets corresponding the lowest support threshold
$\sigma_{low}$ and then filters the $\sigma_k$ frequent $k$-itemsets to report
the frequent itemsets.  In this case, a very large set of frequent itemsets is
generated that renders the filtering process computationally expensive. The
pruning based on $\sigma_{low}$ in IFP\_MLMS ensures that the search space is
same for both the algorithms. Moreover, the filtering required for the former is
implicitly performed in IFP\_MLMS, thus making IFP\_MLMS more efficient.

\section{Conclusions}
\label{sec:conc}

In this paper, we have introduced a novel algorithm, IFP\_min, for mining
minimally infrequent itemsets (MIIs).  To the best of our knowledge, this is the
first paper that addresses this problem using the pattern-growth paradigm.  We
have also proposed an improvement of the Apriori algorithm to find the MIIs.
The existing algorithms are evaluated on dense as well as sparse datasets.
Experimental results show that: (i)~for large dense datasets, it is preferable
to use IFP\_min algorithm, (ii)~for small dense datasets, MINIT should be used
at low support thresholds and IFP\_min should be used at larger thresholds and
(iii)~for sparse datasets, Apriori\_min should be used for reporting the MIIs.

We have also designed an extension of the algorithm for finding frequent
itemsets in the multiple level minimum support (MLMS) model.  Experimental
results show that this algorithm, IFP\_MLMS, outperforms the existing
candidate-generation-and-test based Apriori\_MLMS algorithm.

In future, we plan to utilize the scalable properties of our algorithm to mine
maximally frequent itemsets.  It will be also useful to do a performance
analysis of IFP-tree in parallel architecture as well as extend the IFP\_min
algorithm across different models for itemset mining, including interestingness
measures.

\bibliographystyle{abbrv}
\balance
\bibliography{comad}

\end{document}